\documentclass[submission,copyright]{eptcs}

\title{A Graph-Transformational Approach for Proving the Correctness of Reductions between NP-Problems}
\author{Hans-J\"org Kreowski, Sabine Kuske, Aaron Lye, Aljoscha Windhorst
\institute{University of Bremen, Department of Computer Science\\ 
P.O.Box 33 04 40, 28334 Bremen, Germany}
\email{$\{$kreo,kuske,lye,windhorst$\}$@uni-bremen.de}
}

\usepackage{csquotes}
\usepackage{graphicx}
\usepackage{subcaption}

\usepackage{amsfonts,latexsym,amssymb,amsmath}
\usepackage{setspace}
\usepackage{multirow}
\usepackage{tikz}
\usetikzlibrary{patterns,%
								positioning,%
								calc,%
								scopes,%
								decorations.fractals,%
								shapes.geometric,%
								shadows,%
								decorations.pathreplacing,%
								decorations.pathmorphing,%
								decorations.markings,%
								decorations,%
								shapes,%
								arrows,
       automata,
       petri,
								snakes,
       lindenmayersystems,
       cd
}

\tikzstyle{every picture} = [every state/.style = {circle,draw,minimum size = 0.15cm, inner sep = 1pt}, every edge/.style = {draw,inner sep = 2pt},on grid,inner sep = 0.03cm, every loop/.style={}, node distance = 0.4cm]
\tikzset{
    partial ellipse/.style args={#1:#2:#3}{
        insert path={+ (#1:#3) arc (#1:#2:#3)}
    }
}
\tikzset{
    wavy/.style = {->,very thin,double distance = 0,shorten <=.2pt,>=stealth',snake=snake, segment amplitude=1pt, segment length=8pt, line after snake=4pt}
}
\tikzset{
    wavybd/.style = {<->,very thin,double distance = 0,shorten <=.2pt,>=stealth',snake=snake, segment amplitude=1pt, segment length=8pt, line before snake=4pt, line after snake=4pt}
}

\tikzstyle{every picture} = [every state/.style = {circle,draw,minimum size = 0.15cm, inner sep = 1pt}, every edge/.style = {draw,inner sep = 2pt},on grid,inner sep = 0.03cm, every loop/.style={}, node distance = 0.4cm]
\usepgflibrary{shapes.geometric}

\usepackage{xifthen}
\usepackage{xspace}
\usepackage{latexsym,amssymb,amsmath,amsthm}
\usepackage{hyperref}
\usepackage{tabularx}
\usepackage{enumitem}

\usepackage{changes}

\newcommand{\dder}{\mathop{\Longrightarrow}\limits}

\newcommand{\rdder}{\mathop{\Longleftarrow}\limits}

\newcommand{\F}{\mathcal{F}}
\newcommand{\G}{\mathcal{G}}

\newcommand{\Nat}{\mathbb{N}}

\newcommand{\cardinality}[1]{\lvert #1 \rvert}

\newcommand{\iso}{\cong}

\newcommand{\NP}{\mathrm{NP}}

\newcommand{\DEC}{\mathit{DEC}}
\newcommand{\BOOL}{\mathit{BOOL}}

\newcommand{\TRUE}{\mathit{TRUE}}
\newcommand{\FALSE}{\mathit{FALSE}}

\newcommand{\pol}{\mathit{p}}
\newcommand{\size}{\mathit{size}}

\newcommand{\forbidden}{\mathit{forbidden}}
\newcommand{\standard}{\mathit{standard}}

\newcommand{\SEM}{\mathit{SEM}}

\newcommand{\DS}{\mathit{DS}}
\newcommand{\CDS}{\mathcal{DS}}
\newcommand{\RED}{\mathit{RED}}
\newcommand{\red}{\mathit{red}}

\newcommand{\conflux}{\mathit{conflux}}
\newcommand{\interchange}{\mathit{interchange}}
\newcommand{\sprout}{\mathit{sprout}}
\newcommand{\couple}{\mathit{couple}}
\newcommand{\associate}{\mathit{associate}}
\newcommand{\funct}{\mathit{funct}}
\newcommand{\gtu}{\mathit{gtu}}

\newcommand{\AS}{\mathit{AS}}

\newcommand{\rl}[1]{rule $#1$}

\newcommand{\lb}[1]{$\mathit{#1}$}

\newcommand{\lp}[1]{\lb{#1}-loop}

\newcommand{\hamcycle}{\mathit{hamcycle}\xspace}

\newcommand{\TSP}{\mathit{TSP}\xspace}
\newcommand{\hpmark}{\alpha}
\newcommand{\hampath}{\mathit{hampath}\xspace}
\newcommand{\clique}{\mathit{clique}\xspace}
\newcommand{\independentset}{\mathit{independent\text{-}set}\xspace}
\newcommand{\cliquetoindependentset}{\mathit{clique\text{-}to\text{-}independent\text{-}set}\xspace}
\newcommand{\hampathtostwbdtwo}{\mathit{hampath\text{-}to\text{-}2\text{-}bounded\text{-}spantree}\xspace}
\newcommand{\stwbd}[1]{\mathit{#1\text{-}bounded\text{-}spantree}\xspace}
\newcommand{\bound}[1]{\mathit{bound}(#1)\xspace}

\theoremstyle{plain}
\newtheorem{definition}{Definition}
\newtheorem{fact}{Fact}
\newtheorem*{proofprocedure}{Proof procedure}
\theoremstyle{definition}
\newtheorem{example}{Example}
\newtheorem{observe}{Observation}

\usepackage{fontawesome}

\colorlet{ext}{gray}
\colorlet{annotation}{gray}

\begin{document}
\maketitle

\begin{abstract}
The complexity class $\NP$ of decision problems that can be solved nondeterministically in polynomial time is of great theoretical and practical importance where the notion of polynomial-time reductions between $\NP$-problems
is a key concept for the study of $\NP$.
As many typical $\NP$-problems are naturally described as graph problems, they and their reductions are obvious candidates to be investigated by graph-transformational means.
In this paper, we propose such a graph-transformational approach for proving the correctness of reductions between $\NP$-problems.

\end{abstract}

\section{Introduction}
\label{sec:introduction}
The complexity class $\NP$ of decision problems that can be solved nondeterministically in polynomial time is of great theoretical and practical importance, with the notion of polynomial-time reductions between $\NP$-problems being a key concept for the study of $\NP$ and, in particular, of $\NP$-completeness.
Since the early analysis of the class $\NP$ it is known that many graph-theoretic problems are $\NP$-complete (cf. e.g.~\cite{Karp:72,Garey-Johnson:79}). 
Examples include finding a clique, a Hamiltonian cycle, a vertex cover, an independent set, a vertex coloring, etc. 
Moreover, many typical $\NP$-problems are naturally described as graph problems, such as routing and scheduling problems as well as various further optimization and planning problems. 
Usually the complexity class $\NP$ as well as the notion of complexity-theoretic reductions are defined by means of (polynomial-time) Turing machines. 
However, explicit problems and constructions are described on some higher level with much more abstraction.
Polynomial graph transformation units \cite{Kreowski-Kuske-Rozenberg:2008,Kreowski-Kuske:2012} may be helpful not only for specifying and understanding decision problems and reductions, but also for obtaining correctness proofs in a systematic way. 
In~\cite{Kreowski-Kuske:2010c}, it has been shown that polynomial graph transformation units are a formal computational model for decision problems in $\NP$.
In~\cite{Ermler-Kuske-Luderer-vonTotth:2013}, the problems of finding a clique, an independent set, a vertex cover and a Hamiltonian cycle are modeled as graph transformation units, and reductions are characterized by deadlock-free and confluent polynomial graph transformation units.
Moreover, a proof principle is proposed for proving the correctness of reductions.
It is based on a certain interaction of the unit of an $\NP$ problem and the unit of a reduction requiring a family of auxiliary variants of the reduction unit.
In this paper, we continue this research by proposing a novel graph-transformation-based framework for proving the correctness of reductions between $\NP$-problems.
As correctness requires to construct certain derivations of one kind from certain derivations of another kind the initial graphs of which are connected by a third kind of derivations, we provide a toolbox of operations on derivations that allows such constructions.
In particular, we employ operations based on parallel and sequential independence of rule applications.

The paper is organized as follows.
In Section~\ref{sec:preliminaries} and \ref{sec:gtu}, we recall the notions of graph transformation and graph transformation units.
In Section~\ref{sec:modeling-problems}, we present the notion of graph-transformational modeling of NP-problems.
In Section~\ref{sec:modeling-reductions}, we introduce the notion of graph-transformational modeling of reductions.
Section~\ref{sec:correctness-reductions} discusses the correctness proofs of such reductions.
Section~\ref{sec:conclusion} concludes the paper.

\section{Preliminaries}
\label{sec:preliminaries}
In this section, we recall the basic notions and notations of graphs and rule-based graph transformation as far as needed in this paper.

\begin{paragraph}{Graphs}
Let $\Sigma$ be a set of labels with $* \in \Sigma$.
A (directed edge-labeled) \emph{graph} over $\Sigma$ is a system $G = (V,E,s,t,l)$ where $V$ is a finite  set of \emph{vertices}, $E$ is a finite set of \emph{edges}, $s,t\colon E\to V$ and $l\colon E\to\Sigma$ are mappings assigning a \emph{source}, a \emph{target} and a \emph{label} to every edge $e\in E$.
An edge $e$ with $s(e) = t(e)$ is  called a \emph{loop}.
An edge with label $*$ is called an \emph{unlabeled edge}.
In drawings, the label $*$ is omitted. 
\emph{Undirected edges} are pairs of edges between the same vertices in opposite directions.
The components $V$, $E$, $s$, $t$, and $l$ of $G$ are also denoted by $V_G$, $E_G$, $s_G$, $t_G$, and $l_G$, respectively.
The empty graph is denoted by $\emptyset$.
The class of all directed edge-labeled graphs is denoted by $\G_\Sigma$.

For graphs $G,H \in \G_{\Sigma}$, a \emph{graph morphism} $g\colon G\to H$ is a pair of mappings $g_V\colon V_G\to V_H$ and $g_E\colon E_G\to E_H$ that are structure-preserving, i.e.,
$g_V(s_G(e)) = s_H(g_E(e))$,
$g_V(t_G(e)) = t_H(g_E(e))$,
and $ l_G(e) = l_H(g_E(e))$ for all $e \in E_G$.
If the mappings $g_V$ and $g_E$ are bijective, then $G$ and $H$ are \emph{isomorphic}, denoted by $G \cong H$. 
If they are inclusions, then $G$ is called a \emph{subgraph} of $H,$ denoted by $G \subseteq H$.
For a graph morphism $g \colon G\to H$, the image of $G$ in $H$ is called a \emph{match} of $G$ in $H$, i.e., the match of $G$ with respect to the morphism $g$ is the subgraph $g(G) \subseteq H$. 
\end{paragraph}

\begin{paragraph}{Rules and rule application}
A \emph{rule}  $r = (  L\supseteq K\subseteq R)$ consists of three graphs $L,K,R \in \G_{\Sigma}$ such that $K$ is a subgraph of $L$ and $R$.
The components $L$, $K$, and $R$ are called \emph{left-hand side}, \emph{gluing graph}, and \emph{right-hand side}, respectively. 

The application of $r$ to a graph $G$ consists of the following three steps.
(1)~Choose a match $g(L)$ of $L$ in $G$.
(2)~Remove the vertices of $g_V(V_L) \setminus g_V(V_K)$ and the edges of $g_E(E_L) \setminus g_E(E_K)$ yielding $Z$, i.e.,\linebreak $Z=G-(g(L)-g(K))$.
(3)~Add $R$ to $Z$ by gluing $Z$ with $R$ in $g(K)$ (up to isomorphism) yielding $H$.
The construction is subject to the \emph{dangling condition} ensuring that $Z$ becomes a subgraph of $G$ so that $H$ becomes a graph automatically.
Moreover, we require the \emph{identification condition}, i.e., if different items of $L$ are mapped to the same item in $g(L)$, then they are items of $K$.
Sometimes any identification may be forbidden. Then we add the postfix \emph{(inj)} to the rule.
The construction produces a right match $h(R)$ that extends $g(K)$ by the identity on $R-K$ (up to isomorphism).

The application of $r$ to $G$ w.r.t. $g$ is called \emph{direct derivation} and is denoted by $G \dder_r H$
(where $g$ is kept implicit). 
A \emph{derivation} from $G$ to $H$ is a sequence of direct derivations $G_0\dder_{r_1}G_1\dder_{r_2} \cdots \dder_{r_n} G_n$ with $G_0 = G$, $G_n \iso H$ and $n \ge 0$.
$r_1 \cdots r_n$ is called \emph{application sequence}. If $r_1,\cdots, r_n \in P$, then the derivation is also denoted by $G\dder_P^n H$.
If the length of the derivation does not matter, we write $G \dder_P^*H$. 

A rule $r = (L \supseteq K \subseteq R)$ may be equipped with a \emph{negative application condition} (NAC) $N$ with $L \subseteq N$. It prevents the application of $r$ to a graph $G$ if the match $g(L)$ can be extended to a match $\overline{g}(N)$.

Let $\cardinality{S}$ denote the cardinality of a finite set $S$
and $size(G)$ = $\cardinality{V_G} + \cardinality{E_G}$ the size of a graph $G$.
It is worth noting that the application of a given (fixed) rule to $G$ can be performed in polynomial time provided that the equality of labels can be checked in polynomial time.
This is due to the fact that
the number of graph morphisms from $L$ to $G$ is bounded by $size(G)^{size(L)}$ so that a match can be found in polynomial time.
The further steps of the rule application can be done in linear time.

It may be noted that the chosen notion of rule application fits into the DPO framework as introduced in~\cite{Ehrig-Pfender-Schneider73a}
(see, e.g., \cite{Corradini-Ehrig-Loewe.ea96a} for a comprehensive survey).
For details on NACs, we refer to \cite{Habel-Heckel-Taentzer96a}.
\end{paragraph}

\begin{paragraph}{Extension and independence}
Let $d=(G \dder^* H)$ be a derivation and $\widehat{G}$ be a graph with $G \subseteq \widehat{G}$
obtained by adding some vertices and some edges with new or old vertices as sources and targets.
Then $d$ can be extended to $\widehat{d}=(\widehat{G} \dder^* \widehat{H})$ if
none of the direct derivations of $d$ removes a source or a target of a new edge.
To construct $\widehat{d}$, one adds the new vertices and edges to each graph of $d$.
As the chosen sources and targets of new edges are never removed within $d$,
they can be used throughout $\widehat{d}$. By construction, each graph of $d$ is
subgraph of the corresponding extended graph so that the matches are kept
intact and the rules can be applied as in $d$. The extension works also in case
of negative context conditions if the new vertices and edges do not contradict them.

Let $r_i = (L_i \supseteq K_i \subseteq R_i)$ for $i = 1,2$ be rules. 
Two direct derivations $G \dder_{r_i} H_i$ with matches $g_i(L_i)$ are \emph{parallel independent} if $g_1(L_1) \cap g_2(L_2) \subseteq g_1(K_1) \cap g_2(K_2)$.
Successive direct derivations $G \dder_{r_1} H_1 \dder_{r_2} X$ with the right match $h_1(R_1)$ and the (left) match $g'_2(L_2)$ are \emph{sequentially independent} if $h_1(R_1) \cap g'_2(L_2) \subseteq h_1(K_1) \cap g'_2(K_2)$.
It is well-known that parallel independence induces the direct derivations $H_1 \dder_{r_2} X$ and $H_2 \dder_{r_1} X$ with matches $g_2'(L_2) \iso g_2(L_2)$ and $g_1'(L_1) \iso g_1(L_1)$ respectively
and that sequential independence induces the derivation $G \dder_{r_2} H_2 \dder_{r_1} X$ with matches $g_2(L_2) \iso g'_2(L_2)$ and $g'_1(L_1) \iso g_1(L_1)$ (cf., e.g., \cite{Corradini-Ehrig-Loewe.ea96a,Ehrig-Kreowski76,Ehrig-Rosen:76}).

The last two constructions can be extended to derivations by simple inductions.
Sets of rules $P_1$ and $P_2$ are called \emph{independent} if each two applications of a rule of $P_1$ and a rule of $P_2$ applied to the same graph are parallel independent and, applied one after the other, sequentially independent. 
In the special case of $P = P_1 = P_2$, we say that $P$ is independent.
Then the following hold:
(1)~$G \dder^{n}_{P_1} H_1$ and $G \dder^{m}_{P_2} H_2$ induce $H_1 \dder^{m}_{P_2} X$ and $H_2 \dder^{n}_{P_1} X$ for some $X$, and
(2)~$G \dder^{n}_{P_1} H_1 \dder^{m}_{P_2} X$ induces $G \dder^{m}_{P_2} H_2 \dder^{n}_{P_1} X$ for some $H_2$.
The derivations are obtained by repeating the respective constructions for direct derivations of the given derivations as long as possible.
One way to look at the situation is that the $P_2$-derivation is moved either forward or backward along the $P_1$-derivation.
Therefore, we refer to the resulting $P_2$-derivation as \emph{moved variant} of the given $P_2$-derivation. 
\end{paragraph}

\section{Graph Transformation Units}
\label{sec:gtu}
In this section, the basic notions and notations of graph transformation units (see, e.g., \cite{Kreowski-Kuske-Rozenberg:2008,Kreowski-Kuske00a}) are recalled using graphs, rules and rule application as introduced in the previous section. 
Besides a set of rules, a graph transformation unit provides two graph class expressions to specify initial and terminal graphs and a control condition to regulate the derivation process.

We restrict the consideration to graph class expressions $e$ that specify graph classes $\SEM(e) \subseteq \G_{\Sigma}$ such that membership can be decided in polynomial time. 
In examples, we use
(1)~$\forbidden(\F)$ for some finite $\F \subseteq \G_{\Sigma}$ with $\SEM(\forbidden(\F))$ $=$ $\{ G \mid$ there is no match $f\colon F \to G$ for any $F \in \F \}$ and
(2)~some constant terms like \emph{undirected}, \emph{unlabeled}, \emph{simple}, and \emph{loop-free} referring to the class of undirected graphs, unlabeled graphs, simple graphs, i.e., graphs without parallel edges, and loop-free graphs respectively.
The term $\mathit{standard}$ specifies the intersection of these four graph classes.
Moreover, we use the term $\bound{\Nat}$ that specifies the set of all graphs with a single vertex, a single \lb{bound}-loop and an arbitrary number of unlabeled loops.
Another useful graph class expression is $\mathit{reduced}(P)$ for some finite set of rules $P$ that specifies the class of graphs to which none of the rules of $P$ can be applied. Finally, we use the binary operator $+$ for graph class expressions
specifying the disjoint union of the graphs of the corresponding graph classes.
The membership problems of all explicitly used graph class expressions can be
checked in polynomial time. $\mathit{forbidden}$ and $\mathit{reduced}$ are variants of the
matching problem. The properties of the $\mathit{standard}$ expressions follow from an
inspection of the edges which applies also to $\bound{\Nat}$. The disjoint
union symbol is only used in $\mathit{standard}+\bound{\Nat}$. Given a graph, one can
identify the component $\bound{k}$ for some $k$ by inspecting all edges.
The rest of the graph must be checked with respect to $\standard$.

As control conditions, we use extended regular expressions over sets of rules.
Let $C$ be a regular expression over some set of rules $P$ and $d = ( G \dder^{*}_{P} H )$.
Then $d$ \emph{satisfies} $C$ if the application sequence of $d$ is in $L(C)$ where $L(C)$ is the regular language specified by $C$. This is denoted by $G \dder^{*}_{P,C} H$.
Employing a finite automaton corresponding to the regular expression, one can control such derivations stepwise, meaning that the allowed and applicable rules can be determined.
In this way, satisfaction can be checked in polynomial time for derivations of polynomial lengths.
In an \emph{extended regular expression}, the Kleene star $^{*}$ behind a rule may be replaced by $!$, where $r!$ requires that $r$ be applied as long as possible
(in contrast to $r^*$ that allows to apply $r$ arbitrarily often).
As the stepwise control provides all currently applicable rules, the applicability of $r$ is included so that the control of $r!$ does not require any extra time.
In examples, the rules of graph transformation units are listed such that their position in the list ($1,2,3,\ldots$) can be used to refer to them in control conditions.

A \emph{graph transformation unit} is a system $\gtu = (I,P,C,T)$ where $I$ is an \emph{initial} graph class expression, $P$ is a finite set of rules, $C$ is an extended regular expression over $P$ and $T$ is a \emph{terminal} graph class expression.
The components $I$, $P$, $C$ and $T$ of $\gtu$ are also denoted by $I_{\gtu}$, $P_{\gtu}$, $C_{\gtu}$ and $T_{\gtu}$, respectively.
The \emph{semantics} of $\gtu$ is the binary relation $\SEM(\gtu) = (\SEM(I) \times \SEM(T)) \cap \dder^{*}_{P,C}$. 
A derivation $G \dder^{*}_{P,C} H$ with $G \in \SEM(I)$ and $H \in \SEM(T)$ is called \emph{successful}, denoted by $G \dder^{*}_{\gtu} H$.
As long as a derivation $G \dder^{*}_{P} H$ with $G \in \SEM(I)$ follows the stepwise control of $C$, it is called \emph{permitted} and denoted by $G \dder^{*}_{P,C?} H$.

In examples, a graph transformation unit is presented schematically where the 
components $I$, $P$, $C$, and $T$ are listed after respective keywords initial, rules, 
cond, and terminal.

A graph transformation unit is \emph{polynomial} if there is a polynomial $p$ such that $n \leq p(\size(G))$ for each permitted derivation $G \dder^{n}_{P,C?} H$.
A polynomial graph transformation unit $\gtu$ is \emph{functional} if,
for every initial graph $G$, the following holds:
(1)~There is a successful derivation $G \dder^* H$ for some graph $H$.
(2)~Every permitted derivation $G \dder^* \overline{H}$ can be prolonged into
a successful derivation $G \dder^* \overline{H} \dder^* H$.
(3)~If $G \dder^* H$ and $G \dder^* H'$ are successful derivations, then $H \iso H'$.
In this case, the resulting graph is denoted by $\gtu(G)$.
The first property guarantees that every initial graph gets a result.
The second property may be seen as a kind of deadlock freeness.
And the third property is equivalent to the confluence of permitted
derivations whereas arbitrary derivations may not be confluent. The three
properties of functional units allow for every initial graph $G$ to derive the
resulting graph $gtu(G)$ in polynomial time by starting with the 0-derivation of
$G$ and adding derivation steps that keep the permission as long as possible.

Note that we restrict ourselves to graph class expressions with polynomial membership problems for specifying initial and terminal graphs and to extended regular expressions as control conditions  that can be checked stepwise in polynomial time. Therefore, our definition of polynomial graph transformation units
is a special case of the one given in~\cite{Ermler-Kuske-Luderer-vonTotth:2013}
where a more general kind of control conditions is allowed.

\section{Graph-Transformational Modeling of NP-Problems}
\label{sec:modeling-problems}
Based on its input-output semantics, a graph transformation unit can also be interpreted as a model of a decision problem. 
This problem belongs to the class $\NP$ if the length of each derivation of the unit is polynomially bounded and the proper nondeterminism is provided by the general functioning of units. 
This means that the usual technique to prove polynomial termination can be applied to show that the decision problem of a unit is in $\NP$.

\begin{definition}
Let $\gtu = (I,P,C,T)$ be a graph transformation unit. 
Then the decision problem of $\gtu$, denoted by $\DEC(\gtu)\colon \SEM(I) \to \BOOL$, yields $\TRUE$ for $G \in \SEM(I)$ if $(G,H) \in \SEM(\gtu)$ for some $H \in \SEM(T)$ and $\FALSE$ otherwise.
\end{definition}

\begin{observe}
Let $\gtu$ be a graph transformation unit.
Then $\DEC(\gtu) \in \NP$ if $\gtu$ is polynomial.
\end{observe}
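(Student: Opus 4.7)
The plan is to exhibit, for polynomial $\gtu$, a nondeterministic polynomial-time procedure that, given $G \in \SEM(I)$, decides whether $G \dder^{*}_{\gtu} H$ holds for some $H \in \SEM(T)$. Since $\gtu$ is polynomial, every permitted derivation starting from $G$ has length at most $p(\size(G))$ for the given polynomial $p$. A nondeterministic machine can therefore guess, step by step, at most $p(\size(G))$ direct derivations and afterwards verify that the resulting graph is terminal; it accepts precisely when some such guess produces a successful derivation.

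First I would analyze the cost of a single guessed step. In each step the machine chooses a rule $r = (L \supseteq K \subseteq R)$ from the finite set $P$, guesses a match $g \colon L \to G'$ in the current graph $G'$, and applies the rule. As already observed in the preliminaries, the number of candidate morphisms from $L$ to $G'$ is bounded by $\size(G')^{\size(L)}$, so a match can be guessed and verified in polynomial time, and the subsequent deletion and gluing in step (2) and step (3) of a rule application can be carried out in linear time. Permission with respect to the control condition $C$ is maintained by simulating a fixed finite automaton for the extended regular expression $C$ alongside the derivation, which adds only constant work per step.

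Second, I would verify that the sizes of all intermediate graphs remain polynomial in $\size(G)$. Each rule application increases the size by at most a constant $c$ depending only on the finitely many right-hand and gluing graphs of $\gtu$, so after at most $p(\size(G))$ steps the current graph $G'$ satisfies $\size(G') \leq \size(G) + c \cdot p(\size(G))$, which is itself polynomial in $\size(G)$. Consequently each of the polynomially many steps costs polynomial time. The final check $H \in \SEM(T)$ runs in polynomial time by the restriction on terminal graph class expressions, and the analogous initial check $G \in \SEM(I)$ is likewise polynomial.

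The main obstacle is not any single estimate but rather the need to assemble them coherently: one must confirm that the polynomial bound on derivation length, the polynomial cost of individual matching and application steps, the polynomial growth of intermediate graph sizes, and the polynomial-time membership tests for $I$ and $T$ together with stepwise permission for $C$ are all simultaneously available. Each of these is guaranteed by the explicit restrictions built into the definitions given in Sections~\ref{sec:preliminaries} and~\ref{sec:gtu}, so once these ingredients are combined, $\DEC(\gtu) \in \NP$ follows from the standard interpretation of a guess-and-verify procedure as an $\NP$-algorithm.
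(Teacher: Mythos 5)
Your proposal is correct and follows essentially the same route as the paper: it combines the polynomial membership tests for the initial and terminal graph classes, the polynomial-time stepwise control, the polynomial bound on permitted derivation lengths, and the polynomial cost and bounded nondeterminism of each rule application into a guess-and-verify $\NP$-algorithm. You merely spell out details (such as the polynomial growth of intermediate graph sizes) that the paper's proof leaves implicit.
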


\begin{proof}
By definition, the membership problems of the classes of initial and terminal graphs, and the stepwise control of rule application are polynomial and the lengths of permitted derivations are polynomially bounded.
Moreover, the rule application needs polynomial time and its nondeterminism is polynomially bounded so that $\DEC(\gtu) \in \NP$.
\end{proof}

\begin{example}
\label{example_hampath}
The Hamiltonian path problem asks whether a graph contains a simple path that visits all vertices.
The problem is modeled by the unit in Figure~\ref{gtu:hampath}.
The length of each of its derivations is obviously bounded by twice the number of vertices of the initial graph such that $\DEC(\hampath) \in \NP$.
A successful derivation for a concrete instance is depicted in Figure~\ref{fig:hampath-example-derivation}.
\begin{figure}[h!]
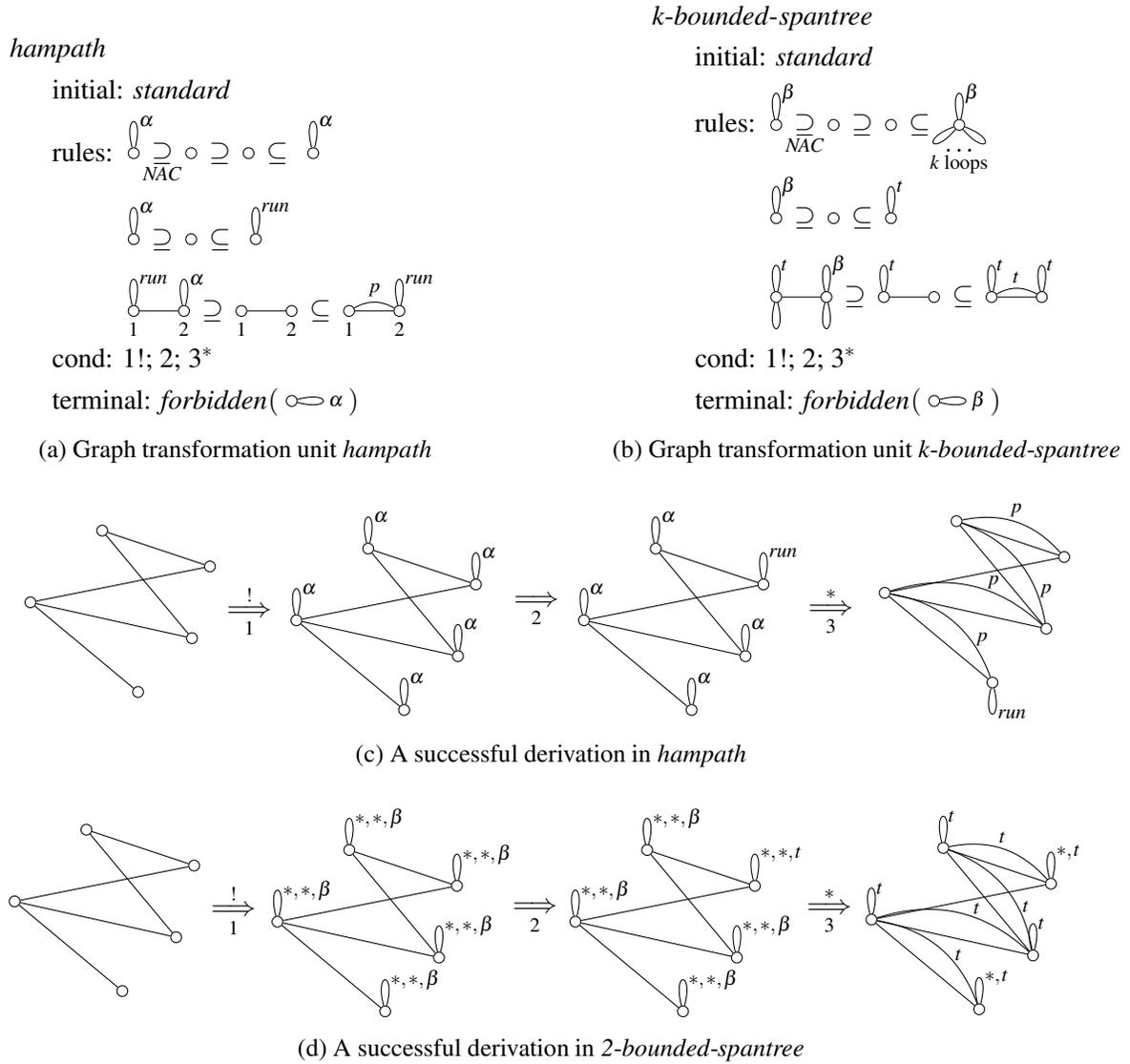

\centering
	\begin{subfigure}[b]{.45\textwidth}
		\centering
		\input{units/hampath}
		\caption{Graph transformation unit $\hampath$}
		\label{gtu:hampath}
	\end{subfigure}
	\hfill
	\begin{subfigure}[b]{.45\textwidth}
		\centering
		\input{units/stwbd}
		\caption{Graph transformation unit $\stwbd{k}$}
		\label{gtu:stwbd}
	\end{subfigure}
	\begin{subfigure}[b]{\textwidth}
		\centering
		\vspace{1em}
		\input{tikz/example-hampath-3-derivation}
		\caption{A successful derivation in $\hampath$}
		\label{fig:hampath-example-derivation}
	\end{subfigure}
	\begin{subfigure}[b]{\textwidth}
		\centering
		\vspace{1em}
		\input{tikz/example-2spantree-3-derivation}
		\caption{A successful derivation in $\stwbd{2}$}
		\label{fig:stwbd-example-derivation}
	\end{subfigure}
        \caption{Graph transformation units $\hampath$ and $\stwbd{k}$ and example derivations in the units}
	\label{gtu:hampath-and-stwbd}
\end{figure}
\end{example}

\begin{example}
\label{example_stwbd}
The $k$-bounded spanning tree problem asks whether, for a given bound $k \in \Nat$, a graph contains a spanning tree with a vertex degree not greater than $k$.
The problem is modeled by the unit in Figure~\ref{gtu:stwbd}.
The length of each of its derivations is obviously bounded by twice the number of vertices of the initial graph such that $\DEC(\stwbd{k}) \in \NP$.
A successful derivation for a concrete instance for the case $k=2$ is depicted in Figure~\ref{fig:stwbd-example-derivation}. We use lists of labels for loops as a abreviation to clarify the drawings.
E.g., a loop with labels  $*,*,t$ represents two unlabeled loops and one $t$-labeled loop.
\end{example}

An often easy way to show the polynomiality of a unit $\gtu$ and as a consequence that $\DEC(\gtu) \in \NP$ is by finding a natural number variable the value of which is polynomially bounded and decreases whenever a rule of $\gtu$ is applied. 
This is a well-known fact.

\begin{fact}[Polynomial Termination]\label{fact_polynomial_termination}
\rm
Let $\gtu = (I,P,C,T)$ be a graph transformation unit, $f\colon \G_{\Sigma} \to \Nat$ be a function and $p$ a polynomial such that $f(x) \leq (\pol \circ \size)(x)$ for each $x \in \G_\Sigma$ and $f(G) > f(H)$ for each direct derivation $G \dder_{P} H$.
Then the length $k$ of every successful derivation $G \dder_{P}^{k} H$ is polynomially bounded in the size of the input graph $G$ such that $\DEC(gtu) \in \NP$.
\end{fact}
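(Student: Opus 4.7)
The plan is to use $f$ as a strictly decreasing monovariant valued in $\mathbb{N}$ to bound derivation lengths, and then invoke the earlier observation that polynomial graph transformation units decide languages in $\NP$.

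First I would consider an arbitrary permitted derivation $G \dder_{P,C?}^{k} H$, with intermediate graphs $G = G_0 \dder_P G_1 \dder_P \cdots \dder_P G_k = H$. By hypothesis, each direct derivation strictly decreases $f$, so $f(G_0) > f(G_1) > \cdots > f(G_k)$. Since every $f(G_i)$ lies in $\Nat$, a strictly decreasing chain of length $k+1$ in $\Nat$ starting at $f(G_0)$ forces $k \leq f(G_0)$. Combined with the global bound $f(G_0) \leq p(\size(G_0))$ this gives $k \leq p(\size(G))$, so derivation lengths are polynomially bounded in $\size(G)$. This bound applies in particular to successful derivations, since every successful derivation is permitted.

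Next I would conclude polynomiality of $\gtu$ in the sense of the definition: the polynomial $p$ serves as the witness required, because the inequality $n \leq p(\size(G))$ holds for every permitted derivation $G \dder_{P,C?}^{n} \overline{H}$. Having established that $\gtu$ is polynomial, I would then simply apply the earlier observation stating that $\DEC(\gtu) \in \NP$ whenever $\gtu$ is polynomial, which immediately yields the conclusion.

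There is really no substantial obstacle here; the argument is a textbook monovariant termination proof followed by an appeal to the preceding observation. The only point that deserves a word of care is to make sure the bound is stated for \emph{permitted} derivations (not just successful ones), since the definition of polynomiality quantifies over permitted derivations, but this follows immediately because the monovariant decreases on every direct derivation, independently of the control condition or of terminality of the result.
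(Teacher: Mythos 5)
Your proof is correct and matches exactly the argument the paper has in mind: the paper leaves this fact unproved (it is introduced as a "well-known fact" whose idea is the decreasing, polynomially bounded natural-number measure described in the preceding sentence), and your monovariant bound $k \leq f(G) \leq p(\size(G))$ for permitted derivations followed by an appeal to the earlier observation that polynomial units have $\DEC(\gtu) \in \NP$ is precisely that intended argument. Your care in stating the bound for permitted (not just successful) derivations, which is what the definition of polynomiality requires, is a correct and welcome refinement.
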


\begin{example}
The polynomial termination of $\hampath$ may be shown using Fact~\ref{fact_polynomial_termination} by separating the unit into two units. 
The first one with rule $1$ as its only rule terminates in a linear number of steps as the number of vertices without \lp{\alpha} decreases if a rule is applied. 
The second unit with rules $2$ and $3$ terminates in a linear number of steps as the number of vertices with \lp{\alpha}s decreases if a rule is applied. 
The polynomial termination of $\stwbd{k}$ can be shown analogously.
\end{example}

\begin{example}
The independent set problem asks whether a graph contains a set of $k$ vertices, no two of which are connected by an edge.
The problem is modeled by the unit in Figure~\ref{gtu:independentset}.
As the number of loops decreases whenever the rule is applied, one gets $\DEC(\independentset) \in \NP$.
A successful derivation for a concrete instance is depicted in Figure~\ref{fig:independentset-example-derivation}.
\end{example}
\begin{figure}[h]
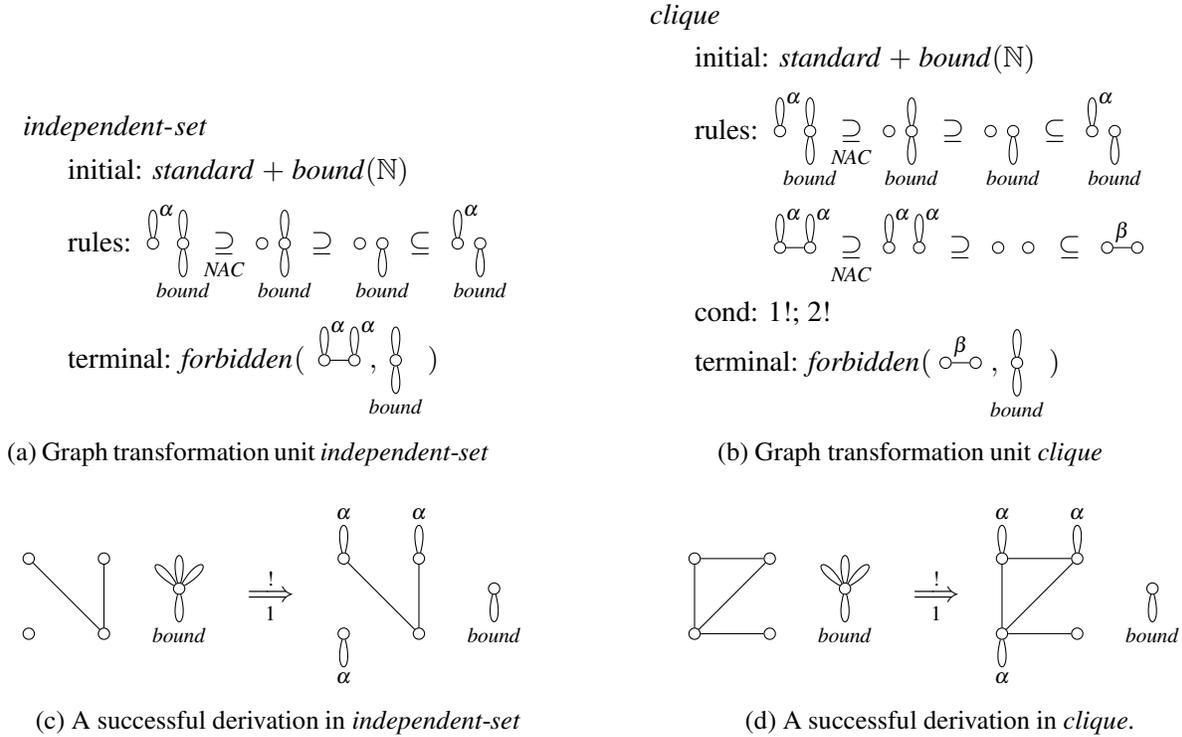

\centering
	\begin{subfigure}[b]{.4\textwidth}
		\centering
		\input{units/independentset}
		\caption{Graph transformation unit $\independentset$}
		\label{gtu:independentset}
	\end{subfigure}
	\hfill
	\begin{subfigure}[b]{.5\textwidth}
		\centering
		\input{units/clique}
		\caption{Graph transformation unit $\clique$}
		\label{gtu:clique}
	\end{subfigure}
	\begin{subfigure}[b]{0.45\textwidth}
		\centering
		\vspace{1em}
		\input{tikz/example-independentset-1-derivation}
		\caption{A successful derivation in $\independentset$}
		\label{fig:independentset-example-derivation}
	\end{subfigure}
	\hfill
	\begin{subfigure}[b]{0.45\textwidth}
		\centering
		\vspace{1em}
        \input{tikz/example-clique-1-derivation}
		\caption{A successful derivation in $\clique$.}
		\label{fig:clique-example-derivation}
	\end{subfigure}
        \caption{Graph transformation units $\independentset$ and $\clique$ and successful derivations in the units}	
	\label{gtu:independentset-and-clique}
\end{figure}

\begin{example}
The clique problem asks whether a graph contains a set of $k$ vertices every two of which are connected by an edge, constituting a \emph{complete} subgraph.
The problem is modeled by the unit in Figure~\ref{gtu:clique}.
As the number of unlabeled loops decreases whenever the first rule is applied and the number of \lp{\alpha}s decreases whenever the second rule is applied, one gets $\DEC(\clique) \in \NP$.
A successful derivation for a concrete instance is depicted in Figure~\ref{fig:clique-example-derivation}. The application of the second rule as long as possible is omitted because the number of applications is $0$ if the derivation is successful.
\end{example}

\section{Graph-Transformational Modeling of Reductions}
\label{sec:modeling-reductions}
Reductions between $\NP$-problems are a key concept for the study of the class $\NP$. Given two $\NP$-problems $\DEC$ and $\DEC'$, a reduction $\RED$ is a function from the inputs of $\DEC$ to the inputs of $\DEC'$ that can be computed in polynomial time and is subject to the following correctness condition for each input $x$ of~$\DEC$:
$\DEC(x) = \TRUE  \text{ if and only if } \DEC'(\RED(x)) = \TRUE.$
If the $\NP$-problems are given by polynomial graph transformation units, then a reduction can be modeled as a functional and polynomial graph transformation unit.

\begin{definition}[Reduction]
Let $\gtu$ and $\gtu'$ be polynomial graph transformation units.
Let $\red = (I_{\gtu},$ $P_{\red},C_{\red},I_{\gtu'})$ be a functional and polynomial graph transformation unit.
Then $\red$ is a \emph{reduction} of $\gtu$ to $\gtu'$ if the following holds for all $G \in \SEM(I_{\gtu})$:
$(G,H) \in \SEM(\gtu)$ for some $H \in \SEM(T_{\gtu})$ if and only if $(\red(G),H') \in \SEM(\gtu')$ for some $H' \in \SEM(T_{\gtu'})$.
\end{definition}

\begin{observe}
Let $\red$ be a reduction of $\gtu$ to $\gtu'$. Then the function $\RED = \SEM(\red)\colon \SEM(I_{\gtu}) \to \SEM(I_{\gtu'})$,
defined by $\RED(G) = \red(G)$,
is a reduction of $\DEC(\gtu)$ to $\DEC(\gtu')$, denoted by $\DEC(\gtu)$ $\leq$ $\DEC(\gtu')$.
\end{observe}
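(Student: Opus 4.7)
The plan is to unfold the definition of a complexity-theoretic reduction between $\NP$-problems and verify that $\RED$ satisfies each of its three requirements: being a total function from $\SEM(I_\gtu)$ to $\SEM(I_{\gtu'})$, being computable in polynomial time, and preserving the \TRUE/\FALSE\ answer.

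First I would argue that $\RED$ is a well-defined total function. Since $\red$ is functional by assumption, the three properties of functional units guarantee, for every $G \in \SEM(I_\gtu)$, the existence of a successful derivation $G \dder^{*}_{\red} \red(G)$ with a unique (up to isomorphism) result $\red(G) \in \SEM(I_{\gtu'})$; this is exactly what it means to set $\RED(G) = \red(G)$. Next I would establish polynomial-time computability. Because $\red$ is polynomial, every permitted derivation has length bounded by some polynomial in $\size(G)$; combined with functionality, the discussion following the definition of functional units explicitly shows that $\red(G)$ can be derived in polynomial time by starting from the $0$-derivation of $G$ and greedily extending it by steps that keep permission. Each such step can be found in polynomial time because (i) matches for a fixed rule can be enumerated in polynomial time, (ii) the graph class expressions $I_\gtu$ and $I_{\gtu'}$ have polynomial membership tests by our restriction on graph class expressions, and (iii) the regular control $C_\red$ can be tracked stepwise in polynomial time via its finite automaton.

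For the correctness condition, I would just chase the definitions. By definition of $\DEC$, we have $\DEC(\gtu)(G) = \TRUE$ iff there exists $H \in \SEM(T_\gtu)$ with $(G,H) \in \SEM(\gtu)$, and analogously $\DEC(\gtu')(\RED(G)) = \TRUE$ iff there exists $H' \in \SEM(T_{\gtu'})$ with $(\RED(G),H') = (\red(G),H') \in \SEM(\gtu')$. The defining biconditional for $\red$ being a reduction of $\gtu$ to $\gtu'$ states exactly the equivalence of these two existential statements, so the correctness condition $\DEC(\gtu)(G) = \TRUE \Leftrightarrow \DEC(\gtu')(\RED(G)) = \TRUE$ follows immediately.

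There is no real obstacle here; the statement is essentially a bookkeeping translation between the graph-transformational vocabulary (functional polynomial units with an input-output semantics) and the classical complexity-theoretic vocabulary (polynomial-time computable many-one reductions). The only subtle point worth spelling out is that the polynomial-time computability of $\RED$ is not immediate from $\red$ being polynomial alone but relies crucially on functionality, which rules out backtracking in the greedy construction of a successful derivation from any initial graph.
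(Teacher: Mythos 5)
Your proof is correct and follows essentially the same route as the paper's: functionality plus polynomiality of $\red$ give that $\RED$ is a polynomial-time computable function, and the correctness biconditional is obtained by unfolding the definitions of $\DEC$ and of $\red$ being a reduction of $\gtu$ to $\gtu'$. You merely spell out in more detail (well-definedness and the greedy polynomial-time evaluation of $\red(G)$) what the paper compresses into its opening sentence.
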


\begin{proof}
As $\red$ is functional and polynomial, $\RED$ is a function that is computed in polynomial time.
Its correctness can be shown as follows:
Let $G \in \SEM(I_{\gtu})$ with $\DEC(\gtu)(G) = \TRUE$ meaning that there is an $H \in \SEM(T_{\gtu})$ with $(G,H) \in \SEM(\gtu)$. As $\red$ is a reduction of $\gtu$ to $\gtu'$, this is equivalent to $(\red(G),H') \in \SEM(\gtu')$ for some $H' \in \SEM(T_{\gtu'})$, i.e., $\DEC(\gtu')(\red(G)) = \TRUE$.
As $\red(G) = \RED(G)$, $\RED$ turns out to be a reduction of $\DEC(\gtu)$ to $\DEC(\gtu')$.
\end{proof}

The observation means that $\DEC(\gtu) \leq \DEC(\gtu')$ can be established by means of a reduction of $\gtu$ to $\gtu'$.
The polynomiality of such a reduction can be shown by using Fact~\ref{fact_polynomial_termination}.
The functionality can be shown using the following well-known fact.

\begin{fact}
\rm
Let $\gtu = (I,P,C,T)$ be a polynomial graph transformation unit.
Then $\gtu$ is functional if the following hold:
\begin{enumerate}
\item $P$ is independent,
\item $(G,H) \in \SEM(\gtu)$ implies $H \in \mathit{reduced}(P)$, and
\item $G \dder^{*}_{P,C} H$ for $G \in \SEM(I)$ and $H \in \mathit{reduced}(P)$ implies $(G,H) \in \SEM(\gtu)$.
\end{enumerate}
\end{fact}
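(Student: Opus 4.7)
The plan is to verify the three defining properties of functionality in turn — uniqueness, existence, and extendability — by combining the Church-Rosser-style consequences of the independence of $P$ (extended to derivations in Section~\ref{sec:preliminaries}) with the polynomial termination bound that comes from polynomiality of $\gtu$.

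Uniqueness is immediate from the hypotheses. Suppose $G \dder^*_\gtu H$ and $G \dder^*_\gtu H'$ are two successful derivations. By condition~2, both $H$ and $H'$ belong to $\mathit{reduced}(P)$, and clause~(1) of the independence-for-derivations result (with $P_1 = P_2 = P$) produces some $X$ such that $H \dder^*_P X$ and $H' \dder^*_P X$. Since neither $H$ nor $H'$ admits any further rule application from $P$, both induced derivations must have length zero, and therefore $H \iso X \iso H'$.

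Existence and extendability are treated together, because existence is the special case of extendability applied to the length-$0$ derivation of $G \in \SEM(I)$. Given a permitted derivation $d = (G \dder^*_{P,C?} \overline{H})$, polynomiality bounds the length of every permitted continuation by $p(\size(G))$, so the set of all permitted extensions of $d$ is finite and admits a member $d' = (G \dder^*_{P,C?} H)$ of maximal length that extends $d$. By condition~3, $d'$ is successful as soon as one knows (i) $H \in \mathit{reduced}(P)$ and (ii) the application sequence of $d'$ lies in $L(C)$ itself rather than in a proper prefix.

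Establishing (i) and (ii) for the maximal extension is the main obstacle. For (i), if some $r \in P$ were applicable at $H$, iterated use of clause~(2) of the independence result would shift the prospective $r$-application backwards along the tail of $d'$ into a position where the stepwise automaton associated with $C$ already accepts $r$, yielding a strictly longer permitted derivation and contradicting maximality. The delicate point is certifying that each shifted prefix remains permitted; this is settled by induction on the structure of the extended regular expression $C$, using that $!$- and $^*$-closures absorb additional applications of their indexed rule while sequential composition reduces the question to the same argument within each phase. A symmetric rearrangement handles (ii): whenever the control automaton at $H$ still demands a further rule, independence permits scheduling that rule at $H$, again contradicting maximality unless the application sequence of $d'$ already belongs to $L(C)$.
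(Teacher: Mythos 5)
Your uniqueness argument is sound and is essentially the paper's: independence of $P$ yields confluence of $P$-derivations, and condition~2 places both results of two successful derivations in $\mathit{reduced}(P)$, so the confluence completion is trivial and the results are isomorphic. The genuine gap is in your treatment of existence and extendability. You pick a maximal permitted extension $d' = (G \dder^{*}_{P,C?} H)$ and must establish (i) $H \in \mathit{reduced}(P)$ and (ii) that the application sequence of $d'$ lies in $L(C)$, but neither of your two arguments holds up. For (i), shifting a prospective $r$-application backwards by sequential independence changes the application sequence, and permittedness with respect to $C$ is not preserved under such rearrangements; the ``position where the stepwise automaton already accepts $r$'' need not exist at all, and your structural induction over $C$ rests on the claim that $!$- and $^{*}$-closures absorb extra applications of the rule in question, which fails for any control expression in which that rule occurs a bounded number of times (e.g.\ the single occurrence of rule~$2$ in the condition $1!;2;3^{*}$ of $\hampath$). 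For (ii), independence cannot create matches: if the automaton still demands a rule that has no match in $H$, no amount of interchanging ``schedules that rule at $H$'', so maximality yields no contradiction. To see that (i) and (ii) simply do not follow from independence plus maximality, consider a unit whose only rule $r$ deletes one loop at a vertex, with control $C = r$ and terminal graphs the loop-free ones: $P$ is independent, the unit is polynomial, and conditions~2 and~3 hold, yet from an initial graph with two loops the maximal permitted derivation applies $r$ once and stops at a graph that is neither reduced nor terminal, so the rearrangement machinery you invoke has nothing to act on.

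Note also that the paper's proof does not go down this road: it works with uncontrolled $P$-derivations, combining the polynomial length bound with independence-induced confluence to obtain, for each initial graph, a unique reduced graph reachable by $P$-derivations, and then invokes conditions~2 and~3 wholesale to transfer this to $\SEM(\gtu)$. Your proposal diverges exactly where it tries to reason explicitly about the control automaton along a maximal permitted derivation, and that is where it breaks; if you want to keep your decomposition, you need an additional argument (or hypothesis) relating the stepwise control to applicability of rules, rather than the absorption and scheduling claims as stated.
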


\begin{proof}
As $\gtu$ is polynomial, the lengths of its permitted derivations are finitely bounded.
The required independence implies the confluence of $P$-derivations.
Both together means that for each $G \in \SEM(I)$, there is -- up to isomorphism -- exactly one $H \in \mathit{reduced}(P)$ with $G \dder^{*}_{P} H$.
Properties 2 and 3 make sure that this functionality holds for $\SEM(\gtu)$, too.
\end{proof}

\begin{example}\label{example:reduction-from-hampath-to-stwbdk}
According to the considerations in Examples~\ref{example_hampath} and~\ref{example_stwbd}, we have $\DEC(\hampath)(G) = \TRUE$ if and only if $G$ has a Hamiltonian path, and we have $\DEC(\stwbd{k})(G) = \TRUE$ if and only if $G$ has a spanning tree with a vertex degree not greater than $k$.
A Hamiltonian path is connected and cycle-free and visits all vertices such that it is a spanning tree with vertex degree 2.
Conversely, a spanning tree with vertex degree 2 or smaller must be a simple path and, as it covers all vertices, therefore, a Hamiltonian path.
In other words, the identity on the initial graphs of $\hampath$ and $\stwbd{2}$ is a reduction of $\hampath$ to $\stwbd{2}$,
as specified in the unit in Figure~\ref{gtu:hampath-to-stwbdtwo}.

As there are no rules to be applied and as the regular expression $\epsilon$ specifies the language with the empty string as its sole element allowing derivations of length 0 only, each initial graph derives itself exclusively, which is terminal at the same time because the initial graphs of $\hampath$ and $\stwbd{2}$ coincide.
The unit is obviously polynomial and functional.
\end{example}

\begin{figure}[t]
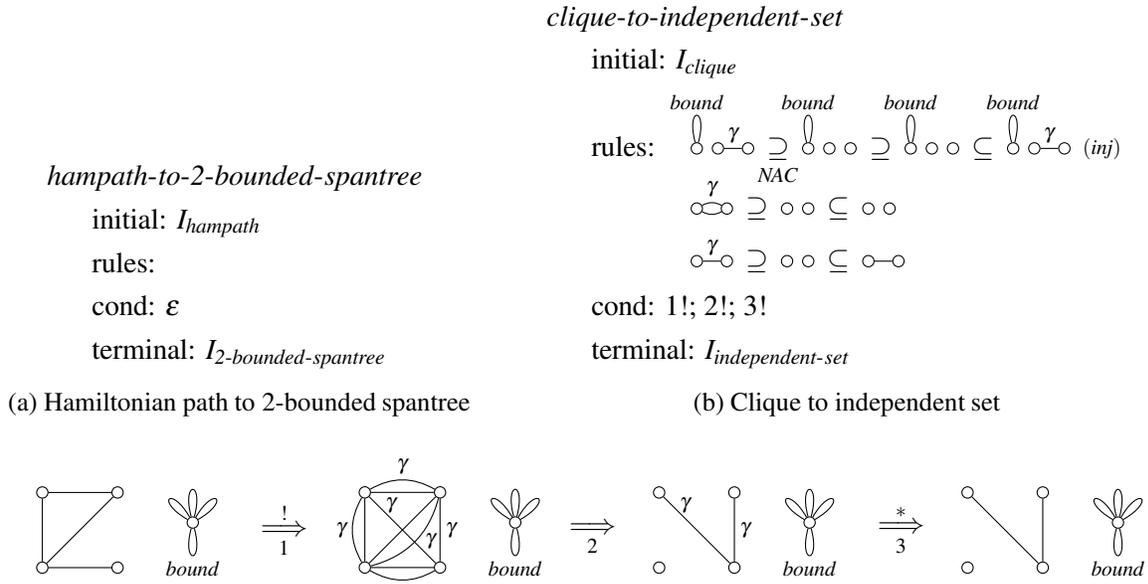

\centering
 \begin{subfigure}[b]{.4\textwidth}
		\centering
		\onehalfspacing
\begin{tabular}{l}
$\hampathtostwbdtwo$\\
\quad
\begin{tabular}{l}
initial: $I_{\hampath}$\\
rules:\\
cond: $\epsilon$\\
terminal: $I_{\stwbd{2}}$
\end{tabular}
\end{tabular}
		\caption{Hamiltonian path to $2$-bounded spantree}
		\label{gtu:hampath-to-stwbdtwo}
	\end{subfigure}
	\hfill
	\begin{subfigure}[b]{.59\textwidth}
		\centering
		\input{units/cliquetoindependentset}
		\caption{Clique to independent set}
		\label{gtu:clique-to-independent-set}
	\end{subfigure}
	\begin{subfigure}[b]{\textwidth}
        \centering
        \vspace{1em}
        \input{tikz/example-cliquetoindependentset-1-derivation}
	\caption{Example derivation in $\cliquetoindependentset$}
	\label{fig:cliquetoindependentset-example-derivation}
        \end{subfigure}
	\caption{Graph transformation units for reductions}
	\label{gtu:reductions}
\end{figure}

\begin{example}\label{exp:reduction-clique-independentset}
The reduction from $\clique$ to $\independentset$ is modeled by the unit in Figure~\ref{gtu:clique-to-independent-set}.
Each initial graph has the form $G + \bound{k}$ for some $k \in \Nat$.
The first rule can be applied to two vertices of $G$ that are not connected by a $\gamma$-edge and neither of them has a $\mathit{bound}$-loop, connecting them by a $\gamma$-edge. 
Hence the rule can be applied $\cardinality{V_G} \cdot (\cardinality{V_G} - 1)/2$ times and must be applied as often due to the control condition.
Then, any two vertices of $G$ are connected by $\gamma$-edges.
The second rule can be applied to an unlabeled edge of $G$ with a parallel $\gamma$-edge, removing both.
Due to the control condition, this must be done for all such pairs of edges in $G$.
The number of steps is $\cardinality{E_G}$, and $\gamma$-edges remain between all pairs of vertices of $G$ that are not connected in $G$.
By applying the third rule as long as possible, all $\gamma$-edges are replaced by (unlabeled) edges such that the resulting graph is isomorphic to the dual one of the initial graph.
In particular, the unit turns out to be functional and polynomial.
The overall lengths of derivations is $\cardinality{V_G} \cdot (\cardinality{V_G} - 1)$ for $G \in \SEM(I)$.
The rule $i$ for $i = 1,2,3$ is independent of itself, so that its application as long as possible is confluent in every case.
And as all intermediate graphs of derivations contain $\gamma$-edges, the terminal graphs are only reached at the end.
Finally, the unit has the correctness property of a reduction.
If two vertices in a graph $G$ are connected, then they are not connected in the dual graph and the other way round.
Consequently, a clique in $G$ becomes an independent set in the dual graph and the other way round.
A derivation for the initial graph of Figure~\ref{fig:clique-example-derivation} is depicted in Figure~\ref{fig:cliquetoindependentset-example-derivation}.
\end{example}

The correctness proof in Example~\ref{exp:reduction-clique-independentset} can be found in Karp~\cite{Karp:72} and in Garey and Johnson~\cite{Garey-Johnson:79}.
Whenever a graph-transformational reduction and the two involed NP problems modeled by graph transformation
units are characterized in graph-theoretic terms, one can try to prove the correctness of the reduction in the traditional graph-theoretic way as in the
Examples 6 and 7.

In the next section, we investigate an alternative graph-transformational way of proving the correctness of reductions.

\section{Proving the Correctness of Reductions}
\label{sec:correctness-reductions}
In this section, we investigate possibilities to prove the correctness of reductions by graph-trans\-for\-ma\-tio\-nal means.

Let $\gtu$ and $\gtu'$ be two polynomial graph transformation units and $\red = (I_{\gtu}, P_{\red}, C_{\red}, I_{\gtu'})$ be a polynomial and functional graph transformation unit.
Then $\red$ is a reduction of $\DEC(\gtu)$ to $\DEC(\gtu')$ if the following correctness conditions hold:

\begin{paragraph}{Forward}
If $G \dder^{*}_{\gtu} H$ is a successful derivation of $\gtu$, then there is a successful derivation $\red(G) \dder^{*}_{\gtu'} H'$ for some $H' \in \SEM(T_{\gtu'})$.
\end{paragraph}

\begin{paragraph}{Backward}
If $\red(G) \dder^{*}_{\gtu'} H'$ is a successful derivation of $\gtu'$, then there is a successful derivation $G \dder^{*}_{\gtu} H$ for some $H \in \SEM(T_{\gtu})$.\\
\end{paragraph}

The diagram in Figure~\ref{fig:derivationstructures-schema} illustrates the task.
Given a $\gtu$- and a $\red$-derivation, one must find an appropriate $\gtu'$-derivation, and conversely given a $\gtu'$-derivation and the same $\red$-derivation, one must find an appropriate $\gtu$-derivation.
To achieve this, we propose a kind of toolbox that provides operations on derivation structures where the latter are defined in Definition~\ref{def:derivation-structure} and the operations on them in Definition~\ref{def:derivation-structure-operations}.

A derivation structure is a finite directed graph, the vertices of which are graphs and the edges are direct derivations.

\begin{definition}\label{def:derivation-structure}
A \emph{derivation structure} is a finite unlabeled directed graph $\DS$, such that $V_{\DS} \subseteq \mathcal{G}_{\Sigma}$ and every edge $e \in E_{\DS}$ is a direct derivation $e = (G \dder_{r} H)$ with $s_{\DS}(e) = G$ and $t_{\DS}(e) = H$.
The class of all derivation structures is denoted by $\CDS$.
\end{definition}

The operations add derivations to given derivation structures so that the results are derivation structures.
The idea is
(1)~to start with the derivation structure in Figure~\ref{fig:derivationstructures-a} and to apply operations until the derivation $\red(G) \dder^{*}_{\gtu'} H'$ appears as a substructure and
(2)~to start conversely with the derivation structure in Figure~\ref{fig:derivationstructures-b} and apply operations until the derivation $G \dder^{*}_{\gtu} H$ appears as a substructure.

\begin{figure}[t]
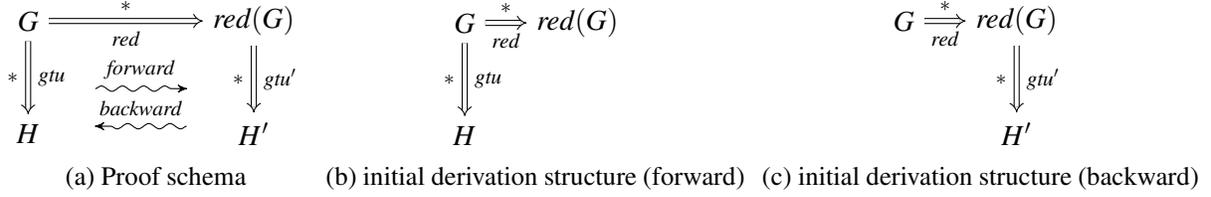

\begin{subfigure}[b]{0.26\textwidth}
\centering
\input{tikz/reduction}
\caption{Proof schema}
\label{fig:derivationstructures-schema}
\end{subfigure}
\hfill
\begin{subfigure}[b]{0.35\textwidth}
\centering
\input{tikz/derivationstructures-a}
\caption{initial derivation structure (forward)}
\label{fig:derivationstructures-a}
\end{subfigure}
\hfill
\begin{subfigure}[b]{0.37\textwidth}
\centering
\input{tikz/derivationstructures-b}
\caption{initial derivation structure (backward)}
\label{fig:derivationstructures-b}
\end{subfigure}
\caption{Correctness proof schema and initial derivation structures}
\label{fig:derivationstructures}
\end{figure}

We provide five operations which are defined as binary relations on derivation structures.
(1)~$\conflux$ of two parallel independent rule applications adding 
two corresponding rule applications (cf. Section~\ref{sec:preliminaries}) 
to the derivation structure;
(2)~$\interchange$ of two sequential independent rule applications adding one corresponding derivation consisting of the interchanged rule applications (cf. Section~\ref{sec:preliminaries}) to the derivation structure;
(3)~$\sprout$ with a functional graph transformation unit as parameter
attaching a successful derivation of the parameter starting in a graph of the given derivation structure;
(4)~$\couple$ with a span of parallel independent rule applications as parameter adding a corresponding extended right part of the span to the derivation structure if an extension of the left part of the span is present; and
(5)~$\associate$ with a pair of particular derivations with the same first graph and the same last graph as parameter adding an extension of the second derivation to the derivation structure if a corresponding extension of the first one is present.
All these enlargements of the derivation structures are formally defined by unions of derivation structures.
Concerning $\conflux$ and $\interchange$, the rule applications corresponding to the
given independent ones are unique up to isomorphism. One of the choices is
taken. Concerning $\sprout$, there are many derivations $G \dder^* \funct(G)$ with a
result unique up to isomorphism. Again one of them is chosen. Concerning
$\couple$ and $\associate$, the extensions are unique up to isomorphism provided they
exist at all. Then one of the extensions is added. The extensions may not exist
because the right component of the span or the pair may remove
vertices that are needed for the extensions. In such a case, the respective
operation is undefined.

\begin{definition}\label{def:derivation-structure-operations}
\begin{enumerate}

\item $\conflux$: Let $\DS \in \CDS$ with two parallel independent direct derivations $G \dder_{r_i} H_i$ for $i = 1,2$ as substructure.
Let $\DS'$ be the enlargement of $\DS$ by two corresponding direct derivations $d_1 = (H_1 \dder_{r_2} X)$ and $d_2 = (H_2 \dder_{r_1} X)$ provided by the parallel independence,
i.e., $\DS' = \DS \cup d_1 \cup d_2$.
Then $(\DS,\DS') \in \conflux$.

\item $\interchange$: Let $\DS \in \CDS$ with two sequential independent direct derivations $G \dder_{r_1} H_1 \dder_{r_2} X$ as substructure.
Let $\DS'$ be the enlargement of $\DS$ by the corresponding derivation $d =\linebreak (G \dder_{r_2} H_2 \dder_{r_1} X)$ provided by the sequential independence, i.e.,
$\DS' = \DS \cup d$.
Then $(\DS,\DS') \in \interchange$.

\item $\sprout(\funct)$: Let $\funct$ be a functional graph transformation unit, $\DS \in \CDS$ and $G \in V_{\DS}$.
Let $\DS'$ be the enlargement of $\DS$ by attaching one of the derivations $d = (G \dder^{*} \funct(G))$ at $G$, i.e.,
$\DS' = \DS \cup d$.
Then $(\DS,\DS') \in \sprout(\funct)$.

\item $\couple(\mathit{span})$: Let $\mathit{span}$ be a pair of parallel independent direct derivations $d_i = (G \dder_{r_i} H_i)$ for $i = 1,2$, $\DS \in \CDS$, and $\widehat{G} \dder \widehat{H}_1$ be a substructure of $\DS$ extending $d_1$.
Let $\DS'$ be the enlargement of $\DS$ by adding one corresponding extension $\widehat{d}_2 = (\widehat{G} \dder_{r_2} \widehat{H}_2)$ of $d_2$, i.e.,
$\DS' = \DS \cup \widehat{d}_2$, provided it exists.
Then $(\DS,\DS') \in \couple(\mathit{span})$.

\item $\associate(\mathit{pair})$: Let $\mathit{pair}$ be a pair of derivations $d_i = (G \dder^* H)$ for $i = 1,2$, $\DS \in \CDS$, $\widehat{d}_1 = (\widehat{G} \dder^{*} \widehat{H})$ be a substructure of $\DS$ extending $d_1$.
Let $\DS'$ be the enlargement of $\DS$ by adding one extension $\widehat{d}_2$ of $d_2$ from $\widehat{G}$ to $\widehat{H}$, i.e.,
$\DS' = \DS \cup \widehat{d}_2$, provided it exists.
Then $(\DS,\DS') \in \associate(\mathit{pair})$.

\end{enumerate}
\end{definition}

Figure~\ref{fig:schematic-drawings-operations} depicts schematic drawings of the operations.
\begin{figure}[th!]
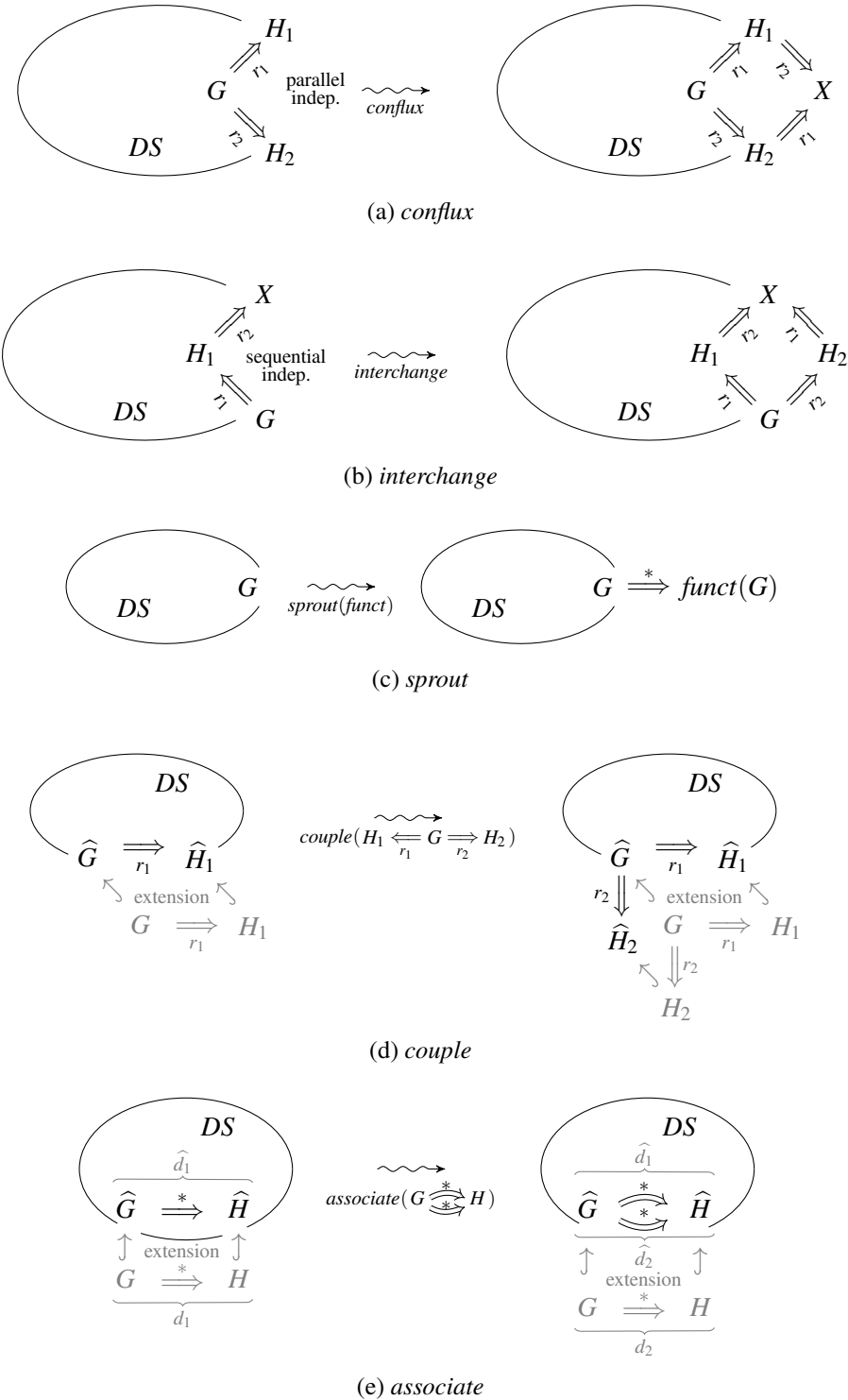

\begin{subfigure}[b]{\textwidth}
\centering
\input{tikz/conflux}
\caption{$\conflux$}
\label{fig:schema-conflux}
\end{subfigure}
\hfill
\begin{subfigure}[b]{\textwidth}
\centering
\vspace{1em}
\input{tikz/interchange}
\caption{$\interchange$}
\label{fig:schema-interchange}
\end{subfigure}
\begin{subfigure}[b]{\textwidth}
\centering
\vspace{1em}
\input{tikz/sprout}
\caption{$\sprout$}
\label{fig:schema-sprout}
\end{subfigure}
\begin{subfigure}[b]{\textwidth}
\centering
\vspace{2em}
\input{tikz/couple}
\caption{$\couple$}
\label{fig:schema-couple}
\end{subfigure}
\begin{subfigure}[b]{\textwidth}
\centering
\vspace{1em}
\input{tikz/associate}
\caption{$\associate$}
\label{fig:schema-associate}
\end{subfigure}
\caption{Schematic drawings of the operations}
\label{fig:schematic-drawings-operations}
\end{figure}

In the following examples, we demonstrate how the operations may be employed to prove the correctness of reductions.

\begin{example}
\label{example:correctness-clique-to-independent-set}
Using the operations $\conflux$, $\interchange$ and $\sprout$, the correctness of\linebreak $\cliquetoindependentset$ can be shown as follows.
Let $G \dder^{!}_{1} G_1 \dder^{!}_{2} G_2$ be a successful derivation in $\clique$.
As $G_2$ is terminal, it has no $\beta$-edges so that $G_1 \dder^{!}_{2} G_2$ has length $0$ and can be omitted.
Let $G \dder^{!}_{\overline{1}} \overline{G_1} \dder^{!}_{\overline{2}} \overline{G_2} \dder^{!}_{\overline{3}} \overline{G_3}$ be a successful derivation of $\cliquetoindependentset$
reducing $G$ where the numbers of rules are overlined to distinguish them from the rules $1$ and $2$ of $\clique$.
Each rule application of \rl{1} is parallel independent of each rule application of rule $\overline{i}$ for $i = 1,2,3$.
Therefore, the application of $\conflux$ as long as possible to the rule applications of the two given derivations yields - among others - a derivation $\overline{G_3} \dder^{!}_{1} G'_{1}$ being a moved variant of $G \dder^{!}_{1} G_1$ as defined at the end of Section~\ref{sec:preliminaries}.
Figure~\ref{fig:derivationstructure1} depicts the derivation structure.
\begin{figure}[t]
\begin{subfigure}[b]{\textwidth}
\centering
\input{tikz/derivation-grid-example1}
\caption{Abstract derivation structure}
\label{fig:derivationstructure1}
\end{subfigure}
\begin{subfigure}[b]{\textwidth}
\centering
\vspace{1em}
\input{tikz/example-cliquetoindependentset-grid.tex}
\caption{Derivation structure for forward and backward correctness using the concrete examples presented above}
\label{fig:derivations-clique-proof-procedure-example}
\end{subfigure}
\caption{Abstract and concrete derivation structures in Example~\ref{example:correctness-clique-to-independent-set}}
\end{figure}

The involved rule applications keep the set of vertices invariant.
Moreover, the independence of all rule applications makes sure that the sets of vertices with \lp{\alpha}s in $G_1$ and $G'_1$ are equal.
Let $v$ and $v'$ be two such vertices which are connected by an edge $e$ in $G$ according to the choice of $G \dder^{!}_{1} G_1$.
Consequently, $e$ gets a parallel $\gamma$-edge in $\overline{G_1}$, both are removed between $\overline{G_1}$ and $\overline{G_2}$, and $v$ and $v'$ are not connected in $\overline{G_3}$.
This means that $G'_1$ is a terminal graph of $\independentset$.
Altogether, the reduction is forward correct.
The backwards correctness follows analogously, starting from the derivation reducing $G$ and a successful derivation $\overline{G_3} \dder^{!}_{1} G'_1$ in $\independentset$ and applying $\interchange$ (instead of $\conflux$) to the sequential independent rule applications of \rl{\overline{i}} for $i = 1,2,3$, followed by one application of \rl{1}.
This yields a derivation $G \dder^{!}_{1} G_1$ as a moved variant of $\overline{G_3} \dder^{!}_{1} G'_1$.
As each two vertices with \lp{\alpha}s in $G'_1$ are not connected, the definition of the reduction implies that they are connected in $G_1$.
In particular, \rl{2} of $\clique$ cannot be applied and the derivation $G_1 \dder^{!}_{2} G_2$ has length $0$.
Adding it with $\sprout$ at $G_1$, one gets a successful derivation in $\clique$.
This completes the proof.

Figure~\ref{fig:derivations-clique-proof-procedure-example} depicts the derivation structure in Example~\ref{example:correctness-clique-to-independent-set} using the concrete examples used before.
\end{example}

\begin{figure}[h!]
\begin{subfigure}[b]{\textwidth}
\centering
\input{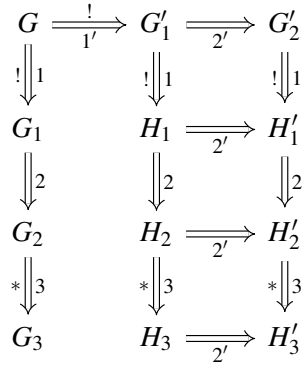}
\caption{Partial abstract derivation structure}
\label{fig:derivationstructure2}
\end{subfigure}
\begin{subfigure}[b]{\textwidth}
\centering
\vspace{3em}
\input{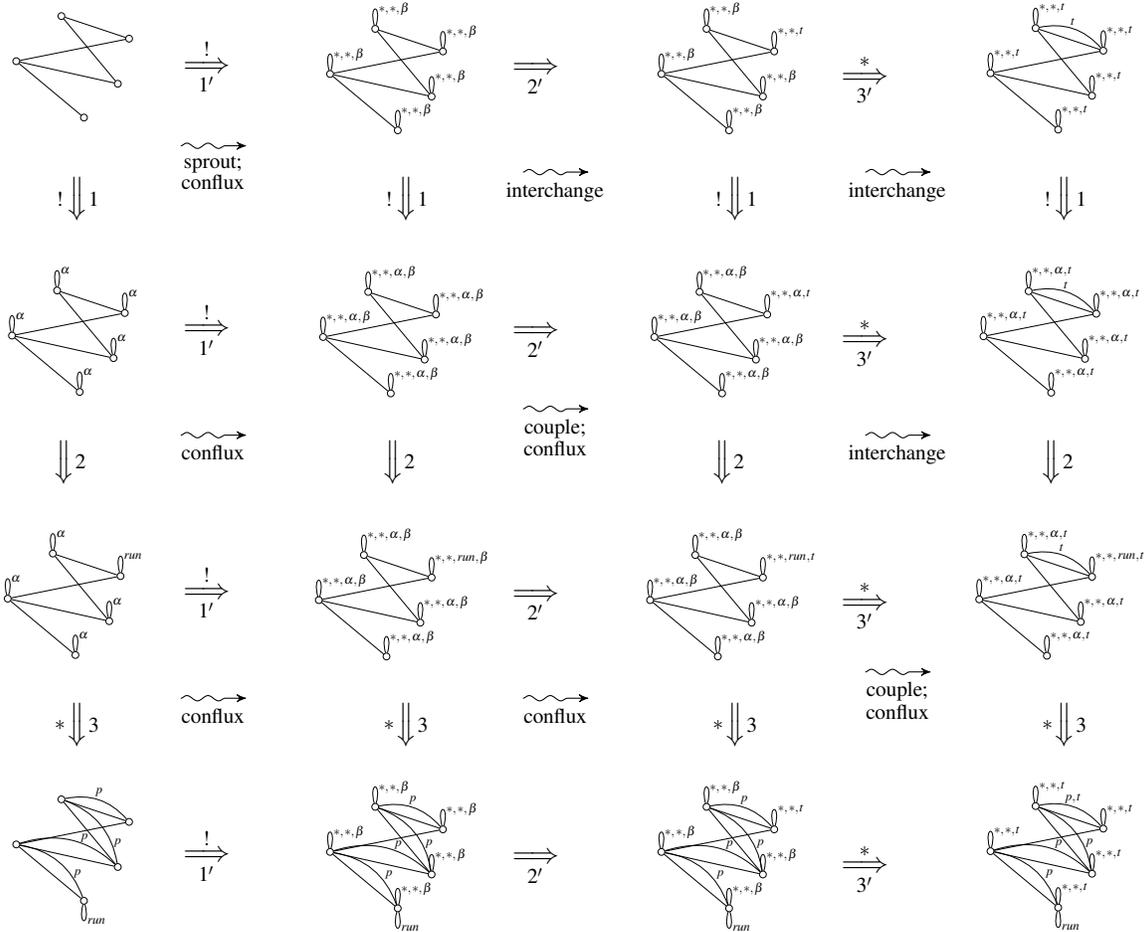}
\caption{Partial derivation structure for forward correctness using the concrete examples presented above}
\label{fig:derivations-hampathtostwdb2-proof-procedure-example}
\end{subfigure}
\caption{Abstract and concrete derivation structures in Example~\ref{example:correctness-hampath-to-stwbdtwo}}
\end{figure}

\begin{example}
\label{example:correctness-hampath-to-stwbdtwo}
In proving the correctness of $\hampathtostwbdtwo$, all five operations are used.
Let $G \dder^{!}_1 G_1 \dder_{2} G_2 \dder^{*}_{3} G_3$ be a successful derivation of $\hampath$.
This is a start structure as the reduction is the identity and one can begin to construct the corresponding derivation immediately.
Its first section is $G \dder^{!}_{1'} G'_{1}$ necessarily which can be added with $\sprout$ at $G$ using the functional unit with the rule $1'$ and $1'!$ as control condition.
The rules of $\stwbd{2}$ are primed to distinguish them from the rules of $\hampath$.
The rule sets $\{1,2,3\}$ and $\{1',2',3'\}$ are independent.
Therefore, $\conflux$ applied as long as possible yields the subderivation $G \dder^{!}_{1'} G'_1 \dder^{!}_{1} H_1 \dder_{2} H_2 \dder^{*}_{3} H_3$ where the last three sections are a moved variant of the given derivation so that $H_1$ contains $G$ and an \lp{\alpha}, a \lp{\beta} and two \lp{*}s at each vertex and $H_3$ contains $G_3$ and a \lp{\beta} and two \lp{*}s at each vertex.
In particular, $H_1 \dder_{2} H_2$ is an extension of the left part of the span
$\input{tikz/example2-span1}$.
Therefore, $\couple$ can be applied adding the direct derivation $H_1 \dder_{2'} H'_1$.
Using the known independence, the latter can be interchanged with the rule applications of $G'_1 \dder^{!}_{1} H_1$ yielding, in particular, $G'_1 \dder_{2'} G'_2$ and $G'_2 \dder^{!}_{1} H'_1$.
Moreover, $\conflux$ can be applied to $H_1 \dder_{2'} H'_1$ and the rule applications of $H_1 \dder_{2} H_2 \dder^{*}_{3} H_3$ yielding the moved variant $H'_1 \dder_{2} H'_2 \dder^{*}_{3} H'_3$.
Figure~\ref{fig:derivationstructure2} shows the main parts of the derivation structure that is constructed so far.
If $H'_2 \dder^{*}_{3} H'_3$ has length $0$, then we are done.
The uppermost horizontal derivation is the result we are looking for.
Otherwise, one must repeat the construction done for $H_1 \dder_{2} H_2$, i.e. couple, interchange, and then conflux as long as possible, for each of the applications of \rl{3} in $H'_2 \dder^{*}_{3} H'_3$ one after the other using the span
$\input{tikz/example2-span2}$.
In the end, this yields a derivation $G'_2 \dder^{*}_{3'} G'_3$ with the same length as $G_2 \dder^{*}_{3} G_3$, i.e., $n-1$ if $G$ has $n$ vertices.
Summarizing, the resulting derivation $G \dder^{!}_{1'} G'_{1} \dder_{2'}$ $G'_2 \dder^{*}_{3'} G'_3$ is successful as the derivation picks $n-1$ edges which always is a spanning tree, therefore, the reduction turns out to be forward correct.

Conversely, let $G \dder^{!}_{1'} G'_1 \dder_{2'} G'_2 \dder^{*}_{3'} G'_3$ be a successful derivation of $\stwbd{2}$. Let $v_0$ be the vertex matched by rule $2'$ and $e_1,\ldots,e_{n-1}$ be the edges matched by the following applications of rule $3'$ in this order, where $e_1$ is attached to $v_0$.
If none of the further edges is attached to $v_0$, then the arguments proving forward correctness can be converted so that one gets backwards correctness, too.
Otherwise, let $e_i$ with $i \geq 2$ be attached to $v_0$.
Without loss of generality, one can assume $i = 2$ because this case can always be obtained by interchanges.
Then the derivation $G'_1 \dder_{2'} G'_2 \dder_{3'} \overline{G} \dder_{3'} \overline{\overline{G}}$ is an extension of the upper derivation in the pair 
\[\input{tikz/example2-association}\]
so that $\associate$ can be applied.
This means that each successful derivation can be rearranged by repeated interchanges and associations in such a way that the conversion of the forward correctness proof works.

Figure~\ref{fig:derivations-hampathtostwdb2-proof-procedure-example} depicts the derivation structure in Example~\ref{example:correctness-hampath-to-stwbdtwo} using the concrete examples used before.

\end{example}

There are several further reductions the correctness of which can be shown in a similar line of consideration, but their documentation here is beyond the space limit of the paper. 
Examples are $\independentset \leq \clique$, $\stwbd{k} \leq \stwbd{l}$ for $2 \leq k < l$ and $\hampath$ $\leq$\linebreak$\hamcycle$ $\leq$ $\TSP$ where $\TSP$ is the famous traveling salesperson problem. 
The latter example works very similar to $\clique \leq \independentset$ with the exception that the successful derivations of $\hamcycle$ are not moved to the end of the reduction derivation, but only to some intermediate graph where the corresponding $\TSP$-derivation is constructed, which is further moved along the reduction.
The examples indicate that the following proof procedure is quite promising to result in the correctness of reductions between $\NP$-problems.

\begin{proofprocedure}
\rm
Let $\gtu$ and $\gtu'$ be two polynomial graph transformation units and $\red = (I_{\gtu}, \overline{P_1} \cup \overline{P_2}, \overline{C}, I_{\gtu'})$ with $\overline{P_1} \cap \overline{P_2} = \emptyset$ be a functional graph transformation unit such that each of its successful derivations has the form $G \dder_{\overline{P_1}}^{*} \overline{G} \dder_{\overline{P_2}}^{*} \red(G)$.
Let $P_{\gtu}$ and $\overline{P_1}$, $\overline{P_2}$ and $P_{\gtu'}$, as well as $P_{\gtu}$ and $P_{\gtu'}$ be disjoint and independent.
Then the correctness of $\red$ may be proved as follows:

\begin{paragraph}{Forward}
Consider the start structure of Figure~\ref{fig:derivationstructures-a}.
\begin{itemize}[leftmargin=21pt]
\item[(f1)] Apply $\conflux$ as long as possible using the independence of $P_{\gtu}$ and~$\overline{P_1}$.
The moved variant of the given $\gtu$-derivation starting at $\overline{G}$ is called \emph{guide} and $\overline{G} = \AS$ \emph{active spot} being the start graph of a permitted $\gtu'$-derivation that is built further on.
\item[(f2)] Repeat the following operations as long as possible depending on the next rule application permitted by the stepwise control condition $C_{\gtu'}$.
\begin{itemize}
\item[(f21)] If the next rule to be applied is the first one of a functional section of a permitted derivation given by $\funct$, then apply $\sprout(\funct)$ to $\AS$ where the graph $\funct(\AS)$ is the new $\AS$.
Then apply $\conflux$ as long as possible using the independence of $P_{\gtu}$ and $P_{\gtu'}$.
The moved variant of the guide is the new guide.
\item[(f22)] Otherwise, apply $\couple(\mathit{span})$ for some given $\mathit{span}$ to the first direct derivation of the guide where this is possible, provided that the prolongation yields a permitted derivation.
Then apply $\conflux$ and $\interchange$ as long as possible.
In particular, the added direct derivation is moved to $\AS$ prolonging the permitted $\gtu'$-derivation.
The result graph is the new $\AS$ and the following $\gtu$-derivation is the new guide.
\end{itemize}
\item[(f3)] Afterwards, apply $\conflux$ as long as possible using the independence of $\overline{P_2}$ and $P_{\gtu'}$ moving the constructed permitted $\gtu'$-derivation to $\red(G)$.
Then continue the derivation if $C_{\gtu'}$ requires further rule application.
\item[(f4)] Check whether the derived graph is in $\SEM(T_{\gtu'})$.
\end{itemize}
\end{paragraph}
\begin{paragraph}{Backward}
Reverse the procedure of the forward proof by starting with the structure of Figure~\ref{fig:derivationstructures-b} and changing the roles of $\gtu$ and $\gtu'$.
\begin{itemize}[leftmargin=23pt]
\item[(b1)] Apply $\interchange$ instead of $\conflux$ as long as possible using the independence of $\overline{P_2}$ and $P_{\gtu'}$.
\item[(b2)] Repeat the operation as in (f2), but now using the spans conversely.
\item[(b3)] Afterwards, apply $\interchange$ as long as possible using the independence of $P_{\gtu}$ and $\overline{P_1}$.
\item[(b4)] Check whether the derived graph is in $\SEM(T_{\gtu})$.
\end{itemize}
\end{paragraph}
Figure~\ref{fig:derivation-proof-prof-forward-correctness} depicts the derivations for proving correctness.
\begin{figure}[h!]
\centering
\input{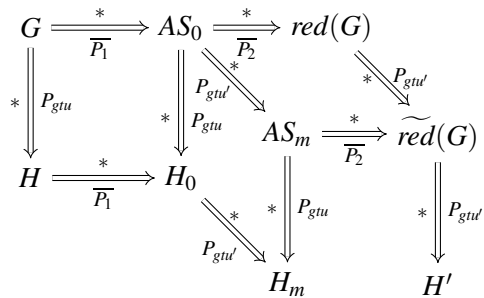}
\caption{Derivations for proving correctness}
\label{fig:derivation-proof-prof-forward-correctness}
\end{figure}
\begin{paragraph}{Preprocessing}
As the backward proof of $\hampathtostwbdtwo$ shows, a preprocessing may be necessary before the backward part of the proof procedure can work.
The reason is that a graph may have several successful derivations, but only some of them may be suitable for forward and backward processing.
In such a case, one may apply $\associate(\mathit{pair})$ for appropriate pairs of derivations together with $\interchange$.
\end{paragraph}
\end{proofprocedure}

It may be noted that our proof for $\clique \leq \independentset$ in Example~\ref{example:correctness-clique-to-independent-set} must be modified to be covered by the proof procedure in the following way: 
$\alpha$ is replaced by $\alpha'$ in $\independentset$ and the corresponding derivation is constructed by applying $\couple$ for each rule application where the span $\input{tikz/independentset-couple-span}$ is used.

The proof procedure works very well for several examples, but is far from
perfect. For instance, we have failed to prove the correctness of the
reduction from independent set to vertex cover and from vertex cover to
Hamiltonian cycles. The reason seems to be that one needs further operations.
But there are various further reasons why the proof procedure may fail:
(1)~The chosen units for the involved NP problems and the reduction may not be suitable.
(2)~The preprocessing may not yield a successful derivation for further processing.
(3)~The processing may get stuck because the proper spans are missing.
(4)~The processing may result in a permitted derivation that is not successful.
Such problems are not unusual if the employed proof technique is not complete
as it is often the case. Fortunately, each of the problems indicate how one
may find a way out: Try further operations, other modeling units, other spans
for coupling, other pairs for association, new ideas for preprocessing. See
the conclusion where some of these points are a bit more elaborated.

\section{Conclusion}
\label{sec:conclusion}
We have made a proposal on how the correctness of reductions between $\NP$ problems may be proved by graph-transformational means. 
For this purpose, we have provided a toolbox that allows to construct successful derivations of one graph transformation unit from successful derivations of another graph transformation unit that represent positive decisions and are connected by a reduction derivation in their initial graphs. 
The approach is an attempt in the early stage of development and needs further investigation, to shed more light on its usefulness.
\vspace{-.2em}
\begin{enumerate}
\itemsep.1em
\item
The proof procedure has turned out to be suitable in several cases, but it seems to fail in other cases like, for example, for the sophisticated reduction from the vertex cover problem to the Hamiltonian cycle problem (cf.~\cite{Karp:72,Garey-Johnson:79}).
\item
To cover more cases or to simplify proofs, one may look for further operations. Candidates are operations like $\conflux$ and $\interchange$ that are based on independence including parallelization, sequentialization, and shift (cf.~\cite{kreowski1977transformations}).
\item
Another possibility is to generalize the coupling by considering spans that consist of derivations rather than direct derivations.
\item
The preprocessing needs more attention. An initial graph may have a large class of successful derivations of which only a few particular ones may be suitable for the further proof procedure. Therefore, operations that preserve successfulness are of interest in addition to the association.
\item
The toolbox relies heavily on independence. Can this be always achieved by respective units? Can it be relaxed?
\item
Is there any chance of tool support for such correctness proofs? For particular initial graphs, the proof procedure is of an algorithmic nature. But what about arbitrary successful derivations and reduction derivations?
\item
A reduction between $\NP$ problems is a kind of model transformation. Do results of the theory of model transformations and their correctness proofs like triple graph grammars (cf., e.g.,~\cite{schurr200815}) apply to reductions?
\end{enumerate}

\vspace{-.9em}
\paragraph{Acknowledgment.}
We are grateful to the anonymous reviewers for their valuable comments.

\vspace{-.9em}
\bibliographystyle{eptcs}
\bibliography{lit,lit2,lit_all}

\begin{thebibliography}{10}
\providecommand{\bibitemdeclare}[2]{}
\providecommand{\surnamestart}{}
\providecommand{\surnameend}{}
\providecommand{\urlprefix}{Available at }
\providecommand{\url}[1]{\texttt{#1}}
\providecommand{\href}[2]{\texttt{#2}}
\providecommand{\urlalt}[2]{\href{#1}{#2}}
\providecommand{\doi}[1]{doi:\urlalt{https://doi.org/#1}{#1}}
\providecommand{\eprint}[1]{arXiv:\urlalt{https://arxiv.org/abs/#1}{#1}}
\providecommand{\bibinfo}[2]{#2}

\bibitemdeclare{incollection}{Corradini-Ehrig-Loewe.ea96a}
\bibitem{Corradini-Ehrig-Loewe.ea96a}
\bibinfo{author}{Andrea \surnamestart Corradini\surnameend},
  \bibinfo{author}{Hartmut \surnamestart Ehrig\surnameend},
  \bibinfo{author}{Reiko \surnamestart Heckel\surnameend},
  \bibinfo{author}{Michael \surnamestart L{\"o}we\surnameend},
  \bibinfo{author}{Ugo \surnamestart Montanari\surnameend} \&
  \bibinfo{author}{Francesca \surnamestart Rossi\surnameend}
  (\bibinfo{year}{1997}): \emph{\bibinfo{title}{Algebraic Approaches to Graph
  Transformation Part {I}: Basic Concepts and Double Pushout Approach}}.
\newblock In \bibinfo{editor}{Grzegorz \surnamestart Rozenberg\surnameend},
  editor: {\slshape \bibinfo{booktitle}{Handbook of Graph Grammars and
  Computing by Graph Transformation, Vol.~1: Foundations}},
  \bibinfo{publisher}{World Scientific}, \bibinfo{address}{Singapore}, pp.
  \bibinfo{pages}{163--245}, \doi{10.1142/9789812384720\_0003}.

\bibitemdeclare{inproceedings}{Ehrig-Kreowski76}
\bibitem{Ehrig-Kreowski76}
\bibinfo{author}{Hartmut \surnamestart Ehrig\surnameend} \&
  \bibinfo{author}{Hans-J\"org \surnamestart Kreowski\surnameend}
  (\bibinfo{year}{1976}): \emph{\bibinfo{title}{Parallelism of Manipulations in
  Multidimensional Information Structures}}.
\newblock In: {\slshape \bibinfo{booktitle}{Proc.\ Mathematical Foundations of
  Computer Science}}, {\slshape \bibinfo{series}{Lecture Notes in Computer
  Science}}~\bibinfo{volume}{45}, pp. \bibinfo{pages}{284--293},
  \doi{10.1007/3-540-07854-1\_188}.

\bibitemdeclare{inproceedings}{Ehrig-Pfender-Schneider73a}
\bibitem{Ehrig-Pfender-Schneider73a}
\bibinfo{author}{Hartmut \surnamestart Ehrig\surnameend},
  \bibinfo{author}{Michael \surnamestart Pfender\surnameend} \&
  \bibinfo{author}{Hans-J{\"u}rgen \surnamestart Schneider\surnameend}
  (\bibinfo{year}{1973}): \emph{\bibinfo{title}{Graph Grammars: An Algebraic
  Approach}}.
\newblock In: {\slshape \bibinfo{booktitle}{IEEE Conf. on Automata and
  Switching Theory}}, \bibinfo{address}{Iowa City}, pp.
  \bibinfo{pages}{167--180}, \doi{10.1109/SWAT.1973.11}.

\bibitemdeclare{techreport}{Ehrig-Rosen:76}
\bibitem{Ehrig-Rosen:76}
\bibinfo{author}{Hartmut \surnamestart Ehrig\surnameend} \&
  \bibinfo{author}{Barry~K. \surnamestart Rosen\surnameend}
  (\bibinfo{year}{1976}): \emph{\bibinfo{title}{Commutativity of Independent
  Transformations on Complex Objects}}.
\newblock \bibinfo{type}{IBM Research Report} \bibinfo{number}{RC 6251},
  \bibinfo{institution}{Thomas J. Watson Research Center},
  \bibinfo{address}{Yorktown Height, New York, USA}.

\bibitemdeclare{article}{Ermler-Kuske-Luderer-vonTotth:2013}
\bibitem{Ermler-Kuske-Luderer-vonTotth:2013}
\bibinfo{author}{Marcus \surnamestart Ermler\surnameend},
  \bibinfo{author}{Sabine \surnamestart Kuske\surnameend},
  \bibinfo{author}{Melanie \surnamestart Luderer\surnameend} \&
  \bibinfo{author}{Caro \surnamestart {von Totth}\surnameend}
  (\bibinfo{year}{2013}): \emph{\bibinfo{title}{A Graph Transformational View
  on Reductions in {NP}}}.
\newblock {\slshape \bibinfo{journal}{Electronic Communications of the EASST}}
  \bibinfo{volume}{61}, \doi{10.14279/tuj.eceasst.61.832}.

\bibitemdeclare{book}{Garey-Johnson:79}
\bibitem{Garey-Johnson:79}
\bibinfo{author}{Michael~R. \surnamestart Garey\surnameend} \&
  \bibinfo{author}{David~S. \surnamestart Johnson\surnameend}
  (\bibinfo{year}{1979}): \emph{\bibinfo{title}{Computers and Intractability --
  A Guide to the Theory of \textit{NP}-Completeness}}.
\newblock \bibinfo{publisher}{W.A.\ Freeman and Company}, \bibinfo{address}{New
  York}.

\bibitemdeclare{article}{Habel-Heckel-Taentzer96a}
\bibitem{Habel-Heckel-Taentzer96a}
\bibinfo{author}{Annegret \surnamestart Habel\surnameend},
  \bibinfo{author}{Reiko \surnamestart Heckel\surnameend} \&
  \bibinfo{author}{Gabriele \surnamestart Taentzer\surnameend}
  (\bibinfo{year}{1996}): \emph{\bibinfo{title}{Graph Grammars with Negative
  Application Conditions}}.
\newblock {\slshape \bibinfo{journal}{Fundamenta Informaticae}}
  \bibinfo{volume}{26}(\bibinfo{number}{3,4}), pp. \bibinfo{pages}{287--313},
  \doi{10.3233/FI-1996-263404}.

\bibitemdeclare{inproceedings}{Karp:72}
\bibitem{Karp:72}
\bibinfo{author}{Richard~M. \surnamestart Karp\surnameend}
  (\bibinfo{year}{1972}): \emph{\bibinfo{title}{Reducibility Among
  Combinatorial Problems}}.
\newblock In \bibinfo{editor}{Raymond~E. \surnamestart Miller\surnameend} \&
  \bibinfo{editor}{James~W. \surnamestart Thatcher\surnameend}, editors:
  {\slshape \bibinfo{booktitle}{Proceedings of a Symposium on the Complexity of
  Computer Computations}}, \bibinfo{series}{The {IBM} Research Symposia
  Series}, \bibinfo{publisher}{Plenum Press, New York}, pp.
  \bibinfo{pages}{85--103}, \doi{10.1007/978-1-4684-2001-2\_9}.

\bibitemdeclare{inproceedings}{kreowski1977transformations}
\bibitem{kreowski1977transformations}
\bibinfo{author}{Hans-J{\"o}rg \surnamestart Kreowski\surnameend}
  (\bibinfo{year}{1977}): \emph{\bibinfo{title}{Transformations of Derivation
  Sequences in Graph Grammars}}.
\newblock In: {\slshape \bibinfo{booktitle}{Proc. Conference on Fundamentals of
  Computation Theory}}, {\slshape \bibinfo{series}{Lecture Notes in Computer
  Science}}~\bibinfo{volume}{56}, \bibinfo{publisher}{Springer}, pp.
  \bibinfo{pages}{275--286}, \doi{10.1007/3-540-08442-8\_94}.

\bibitemdeclare{article}{Kreowski-Kuske00a}
\bibitem{Kreowski-Kuske00a}
\bibinfo{author}{Hans-J\"org \surnamestart Kreowski\surnameend} \&
  \bibinfo{author}{Sabine \surnamestart Kuske\surnameend}
  (\bibinfo{year}{1999}): \emph{\bibinfo{title}{Graph Transformation Units with
  Interleaving Semantics}}.
\newblock {\slshape \bibinfo{journal}{Formal Aspects of Computing}}
  \bibinfo{volume}{11}(\bibinfo{number}{6}), pp. \bibinfo{pages}{690--723},
  \doi{10.1007/s001659970005}.

\bibitemdeclare{article}{Kreowski-Kuske:2010c}
\bibitem{Kreowski-Kuske:2010c}
\bibinfo{author}{Hans-J\"org \surnamestart Kreowski\surnameend} \&
  \bibinfo{author}{Sabine \surnamestart Kuske\surnameend}
  (\bibinfo{year}{2011}): \emph{\bibinfo{title}{Graph Multiset Transformation
  -- {A} New Framework for Massively Parallel Computation Inspired by {DNA}
  Computing}}.
\newblock {\slshape \bibinfo{journal}{Natural Computing}}
  \bibinfo{volume}{10}(\bibinfo{number}{2}), pp. \bibinfo{pages}{961--986},
  \doi{10.1007/s11047-010-9245-6}.

\bibitemdeclare{article}{Kreowski-Kuske:2012}
\bibitem{Kreowski-Kuske:2012}
\bibinfo{author}{Hans-J\"org \surnamestart Kreowski\surnameend} \&
  \bibinfo{author}{Sabine \surnamestart Kuske\surnameend}
  (\bibinfo{year}{2012}): \emph{\bibinfo{title}{Polynomial Graph
  Transformability}}.
\newblock {\slshape \bibinfo{journal}{Theoretical Computer Science}}
  \bibinfo{volume}{429}, pp. \bibinfo{pages}{193--201},
  \doi{10.1016/j.tcs.2011.12.039}.

\bibitemdeclare{incollection}{Kreowski-Kuske-Rozenberg:2008}
\bibitem{Kreowski-Kuske-Rozenberg:2008}
\bibinfo{author}{Hans-J{\"o}rg \surnamestart Kreowski\surnameend},
  \bibinfo{author}{Sabine \surnamestart Kuske\surnameend} \&
  \bibinfo{author}{Grzegorz \surnamestart Rozenberg\surnameend}
  (\bibinfo{year}{2008}): \emph{\bibinfo{title}{Graph Transformation Units --
  An Overview}}.
\newblock In \bibinfo{editor}{Pierpaolo \surnamestart Degano\surnameend},
  \bibinfo{editor}{Rocco~De \surnamestart Nicola\surnameend} \&
  \bibinfo{editor}{Jos{\'e} \surnamestart Meseguer\surnameend}, editors:
  {\slshape \bibinfo{booktitle}{Concurrency, Graphs and Models}}, {\slshape
  \bibinfo{series}{Lecture Notes in Computer Science}} \bibinfo{volume}{5065},
  \bibinfo{publisher}{Springer}, pp. \bibinfo{pages}{57--75},
  \doi{10.1007/978-3-540-68679-8\_5}.

\bibitemdeclare{inproceedings}{schurr200815}
\bibitem{schurr200815}
\bibinfo{author}{Andy \surnamestart Sch{\"u}rr\surnameend} \&
  \bibinfo{author}{Felix \surnamestart Klar\surnameend} (\bibinfo{year}{2008}):
  \emph{\bibinfo{title}{15 years of triple graph grammars}}.
\newblock In: {\slshape \bibinfo{booktitle}{International Conference on Graph
  Transformation}}, \bibinfo{organization}{Springer}, pp.
  \bibinfo{pages}{411--425}, \doi{10.1007/978-3-540-87405-8\_28}.

\end{thebibliography}

\end{document}